\tikzset{
  treenode/.style = {shape=rectangle, rounded corners,
                     draw, align=center,
                     top color=white, bottom color=blue!20},
  root/.style     = {treenode, font=\Large, bottom color=red!30},
  env/.style      = {treenode, font=\ttfamily\normalsize},
  dummy/.style    = {circle,draw}
}
\newtheorem{thm}{Theorem}[section]
\newtheorem{prop}[thm]{Proposition}
\newtheorem{defn}{Definition}[section]
\newtheorem{rem}{Remark}
\renewcommand{\boxed}[1]{\text{\fboxsep=.2em\fbox{\m@th$\displaystyle#1$}}}
\title{Towards A Scientific Method for Dynamical Systems}
\author{Vincent Wang\thanks{With thanks to Alan Garfinkel, Pawel Sobocinski, Joel Hancock, Lukas Heidemann, and Gioele Zardini for helpful discussions.}}
\date{%
    Quantum Group, Department of Computer Science, The University of Oxford\\[2ex]%
    \today
}
\begin{document}

\maketitle

\begin{abstract}

The premiss of this paper is the following value proposition:\\

\texttt{Models are good when they describe a system of phenomena, and they are better when they can predict the effect of interventions upon the system.}\\

We introduce a formalism by which dynamical system models can be obtained by specifying basic constituent processes in the form of petri nets, mediated by the Law of Mass Action. We prove the universality of this procedure with respect to ordinary first order differential equations in rational functions. In addition, we introduce a simple graphical procedure for calculating dynamical systems from petri nets in our formalism, which we use to recover several well-known dynamical systems. For proof-of-concept, we use our method to obtain a differential equation model for the HES1 gene transcription network that predicts post-intervention behaviours of the network.

\end{abstract}

\section{Introduction}

When they can be obtained, dynamical systems are extraordinarily expressive descriptive models of complex phenomena \cite{garfinkel_modeling_2017}. However, these models tend to be opaque with respect to what fundamental processes give rise to such systems: often it requires domain expertise and a keen mathematical intuition to glean an understanding of what chemical reactions underlie a system of ordinary first-order differential equations that describe the changing concentrations in a chemical soup over time, and often it cannot be done at all.\\

The epistemic problem is this: anyone, given enough time, can guess at systems of first-order differential equations until one is found that matches up to empirical data `well enough'. How do we know that their model is tracking reality? The practical problem is this: if you have a differential equation model that models a phenomena, and someone asks ``what does the model predict will happen if we do $X$?". How does one answer?\\

Following Popper's characterisation of the scientific method through falsification \cite{thornton_karl_2019}, we at least want to know when a model is wrong. For this, the models in question need not be causal explanations, but must at least carry enough structure to reflect interventions that might be performed in a laboratory. We may then test the model by picking an intervention, performing that intervention in the lab, and then seeing if the intervened mechanism still matches up with empirical data from the intervention condition. Corroborating data cannot tell if the model is 'correct', but if the intervened mechanism and intervened data do not agree, then we can be sure that the model is wrong. If the model stands up to all the interventions we can think of, then it is good insofar as it tracks reality to the extent of all the constituent processes it posits.\\

Guessing a dynamical system that matches some data without an underlying intervention-compatible structure is a nonstarter for the scientific method. If guessing is the only game in town, changing any of the circumstances of the phenomena yields an essentially new phenomenon one has to guess a new model for. This is more stamp-collection than science.\\

On the other hand, combinatorial Petri nets \cite{petri_kommunikation_1962} are excellent prescriptive models of phenomena, that specify phenomena `bottom-up' from constituent processes, in contrast to the descriptive `top-down' flavour of dynamical systems models. If we can systematically translate combinatorial petri nets built from our understanding of constituent processes into dynamical systems models that agree with data obtained from composite and complex processes, we have a foothold to do science. We may attempt to intervene on the consituent processes, both in the abstract petri net and in the laboratory, and obtain new dynamical systems models that purport to describe phenomena under different initial circumstances. We continue a discussion of causality and intervention with respect to dynamical systems in section 2, illustrating the need for a novel approach.\\

In section 3, we show how the Law of Mass Action provides a `natural' conversion of petri nets with rates into dynamical systems models, in the sense that the procedure requires `no added information'; the data of the petri net fully determines the resultant dynamical system. We further develop the mathematical framework to acccommodate a general form of modelling assumption that subsumes the kind of mathematical manoeuvre familiar to practicing mathematical modellers. We conclude the section with a universality result: any system of ordinary first-order differential equations in rational functions can be obtained from a petri net with rates, by applying law of mass action and appropriate modelling assumptions.\\

Non-technical readers unfamiliar with category theory may skim section 3, as in section 4 we introduce a simple graphical procedure to obtain dynamical systems from petri nets that mirrors the computations of our mathematical framework. Using this graphical procedure, we demonstrate how several well-known non-spatial dynamical systems can be recovered from petri net specifications with clear operational interpretations. For engineers, we also include pseudocode for the procedure in the appendix material.\\

In section 5, we apply our graphical procedure to a petri net schematic of the processes underlying the HES1 gene transcription network \cite{hirata_oscillatory_2002}, and we show how, provided empirical data, we may obtain concrete dynamical systems models that are robust to laboratory interventions: our proof-of-concept for a scientific method for dynamical systems.

\section{Causal Models}

Due to Pearl \cite{pearl_causality_2000}, we have a robust repertoire of conceptual tools to perform causal analysis on certain classes of models, including modelling the effect of intervention. A simple example of a Pearlian structural causal model is the diagram $A \rightarrow B \rightarrow C$, where $A,B,C$ are variables of a system. Such schema are realised as models when each arrow is assigned a function, which, taking a value from its input, \emph{uniquely determines} the value of its output.\\

For example, let $A$ be the outcome of a coinflip, $B$ the choice of two paths a fool might take, the left leading to a pub and the right into a ravine, and $C$ whether the fool has a good time in the former or falls to their death in the latter. The coinflip is modelled by an arrow $X \rightarrow A$, where $X$ is a random variable exogenous to the causal system in question. Say that the causal function $A \rightarrow B$ is the fool's pact to themselves that they will take the left path if the coin comes up heads, and the right if the coin comes up tails. And say {}that the causal function $B \rightarrow C$ is the outcome mapping: 

\begin{align*}
\texttt{left}\mapsto\texttt{good time in pub}\\
\texttt{right}\mapsto\texttt{death in ravine}
\end{align*}

The causal model provides a framework for analysis: if we see the fool dead in the ravine, we can meaningfully evaluate the semantics of, for instance, the counterfactual statement \texttt{The fool wouldn't have died if the coin came up heads.} We can also consider interventions in the causal system. The Pearlian notion of intervention finds expression through the ``do" operator, which performs surgery \cite{jacobs_causal_2019} on the arrows of a causal graph; to $\texttt{do}(Y=y)$ with respect to a causal graph is to cut all incoming arrows of the node $Y$, and replace them with a single constant function $y$ going into $Y$. This is made clearer by example: we will play the hand of fate to save the fool.\\

We have three ways to save the fool by a Pearlian intervention. The first is to ensure that the coin comes up heads: $\texttt{do}(A = \texttt{heads})$.  Schematically, this cuts the arrow $X \rightarrow A$, and replaces it by the constant function $X \mapsto \texttt{heads}$. One semantic interpretation is that we have the powers to manipulate chance, and another is that we replace the coin with one that is heads on both sides.\\

The second is to force the mind of the fool: $\texttt{do}(B = \texttt{left})$. Schematically, we cut the $A \rightarrow B$ arrow, and replace it by the constant function $A \mapsto \texttt{left}$. One semantic interpretation is that we distract the fool from their pact and lure them down the left path.\\

Finally, we might force the fool to end up in the pub regardless of the path they take: $\texttt{do}(C = \texttt{good time in pub})$. Schematically, we cut the $B \rightarrow C$ arrow, and replace it by the constant function $C \mapsto \texttt{good time in pub}$. One semantic interpretation is that we set up a detour on the right path leading to the left, and a meaner one is that we knock the fool unconscious whichever path they take, and bring them to the pub to regain consciousness.

\subsection{The unfitness of Pearlian Intervention for dynamical systems}

Pearlian structural causal models and Pearlian intervention are fatally unfit for direct application to dynamical systems. Pearlian causal graphs are necessarily directed acyclic, due to the implicit temporality of causality: at least in classical physics, we disallow events in the future to have bearing on events in the past. This raises a problem when we consider even the most basic dynamical systems. Take a simple logistic growth model, with two variables $P$ (for population) and $R$ (for resources). We might identify two essentially deterministic processes at play:

\begin{enumerate}
\item{Population reproduction rate is determined by the current population size and the amount of available resource. Reproduction produces more population and consumes resource.}
\item{Population death rate is determined by the current population. Death reduces population and generates resource.}
\end{enumerate}

There is no evident way to condense this setup into a structural causal model. Suppose we unfold the graph in time, obtaining an infinite hypergraph:

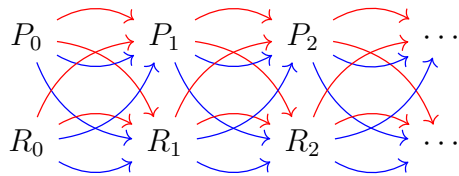
\begin{figure}[h]
\centering
\begin{tikzcd}
P_0 \arrow[r,bend left,red] \arrow[dr, bend left,red] \arrow[r,bend right,blue] \arrow[dr, bend right, blue] & P_1 \arrow[r,bend left,red] \arrow[dr, bend left,red] \arrow[r,bend right,blue] \arrow[dr, bend right, blue] & P_2 \arrow[r,bend left,red] \arrow[dr, bend left,red] \arrow[r,bend right,blue] \arrow[dr, bend right, blue] & \cdots \\
R_0 \arrow[r, bend left, red] \arrow[ur, bend left, red] \arrow[r, bend right, blue] \arrow[ur, bend right, blue]& R_1 \arrow[r, bend left, red] \arrow[ur, bend left, red] \arrow[r, bend right, blue] \arrow[ur, bend right, blue] & R_2 \arrow[r, bend left, red] \arrow[ur, bend left, red] \arrow[r, bend right, blue] \arrow[ur, bend right, blue] & \cdots
\end{tikzcd}
\caption{The reproduction and death processes retain the same colours, and are dashed for resources for the sake of legibility. The numerical subscripts on $P$ and $R$ are discrete time indices}
\label{unfoldproblem}
\end{figure}

In the `unfolded' transcription, temporality is made explicit, but with an essentially arbitrary choice of discrete timestep, which comes with sacrifices, \emph{e.g.} to continuity.\\

The real problem comes with the notion of Pearlian intervention. What would it mean to $\texttt{do}(P = 123)$? The simplest interpretation is that we somehow hold the population constant at 123. In the unfolded interpretation, even if we are clever, we have to cut an infinity of edges, one for each timestep. And whatever formalism we choose, we are asking too much of the experimentalist: we are asking them to summon a Maxwell's Demon in the system that conspires to remove and include members of the population as required at all times to keep the population constant at 123.

\subsection{Our approach}

If we take a more na\"{i}ve approach, a `flat' transcription of the processes into a directed hypergraph (\emph{i.e.} a Petri Net) poses a different challenge: we must retrieve an essentially continuous dynamical system from it.

\begin{figure}[h]
\centering
\begin{tikzcd}
& \texttt{[Reproduction]} \arrow[dl,blue] \arrow[dr, blue] &\\
P \arrow[tail,ur,bend right,blue] \arrow[tail, dr,bend right,red] & & R \arrow[tail, ul,bend left,blue] \arrow[tail, dl, bend left, red]\\
& \texttt{[Death]} \arrow[ur,red] \arrow[ul,red] &
\end{tikzcd}
\caption{The reproduction process is in blue, and the death process is in red. Each process is a directed hyperedge, taking two inputs and returning two outputs.}
\label{flatproblem}
\end{figure}
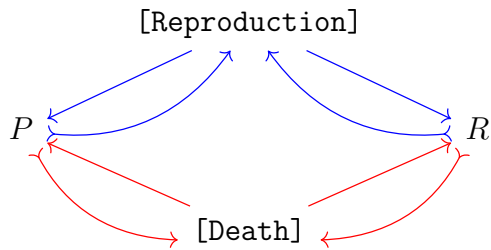

When faced with the real practice of analysing biochemical systems, the only reasonable interventions the experimentalist can perform are to add new basic processes via introduction of new species. If the domain is understood particularly well, then new species can be introduced to selectively inhibit or promote the rates of extant processes, which corresponds to increasing or decreasing the rates of those processes, respectively. Departing from Pearl, this will be our operant conception of what qualifies as an intervention: a targeted manipulation of the \emph{rates associated with processes in a petri net}, rather than of the quantity of species.\\

In the following sections, we will show how dynamical systems on one free variable arise `naturally' from petri nets with rates, in service of a demonstration of the power of this approach on real data obtained from the HES1 gene network.

\section{Law of Mass Action, Categorically}

We will be satisfied with an informal formulation of The Law of Mass Action. Used in mathematical models for chemistry \cite{erdi_mathematical_1989}, it states that the rate of an elementary reaction is proportional to the product of the concentrations of its inputs, each raised to the power of the number of occurrences in the input.\\

We conceive of the Law as a method to obtain algebraically specified dynamical systems on a state space $k^S$ from labelled petri nets with rates on state variables $S$, \emph{without adding any more information}.

\begin{defn}[Labelled Petri Net with Rates]

We work in the small category Set. Given finite sets of \textbf{state variables} $S$ and \textbf{transitions} $T$, a \textbf{base field} $k$ of characteristic $0$, we take a \textbf{a labelled petri net with rate variables} to be a commuting diagram as in Figure \ref{basicpetri}.

\begin{figure}[h]
\centering
\begin{tikzcd}
&k &\\
\mathbb{N}^S &T \arrow[l,"i"] \arrow[u,"r"] \arrow[r,"o"] &\mathbb{N}^S
\end{tikzcd}
\caption{The composition of mass-respecting spans by coproduct. The natural isomorphisms that come wtih the coproduct make this impoverished notion of composition commutative.}
\label{basicpetri}
\end{figure}

\end{defn}

Reflecting common approaches to specifying petri nets \cite{master_generalized_2019}, the object $T$ serves as an object of labelled transitions. $T$ is the peak of a span $\mathbb{N}^S \overset{i}{\leftarrow} T \overset{o}{\rightarrow} \mathbb{N}^S$, which assigns input and output types (with multiplicity) for each labelled transition. The arrow $T \overset{r}{\rightarrow} k$ assigns a \textbf{rate constant} in the base field to each transition.

\begin{defn}[Dynamical System]\footnote{We adopt this definition in part for pragmatic considerations: for almost all cases of interest, one cannot hope to find explicit analytic solutions. Knowing how to compute the change-vector field over the state space is sufficient to begin numerical simulation.}
We take \textbf{state space} to be $k^S$. Where tangent bundles $Tk^S$ are well-defined (as is the case for Euclidean space $\mathbb{R}^S$), define a \textbf{dynamical system} over $k^S$ to be a map $k^S \rightarrow Tk^S$. In Set, we may consider the exponential object $(Tk^S)^{(k^S)}$ as the space of all dynamical systems on $k^S$. Typically, we care when the infinitesimal tangent vectors in $Tk^S$ vary smoothly over $k^S$ (viewed as a manifold), in which case we call the dynamical system \textbf{continuous}.
\end{defn}

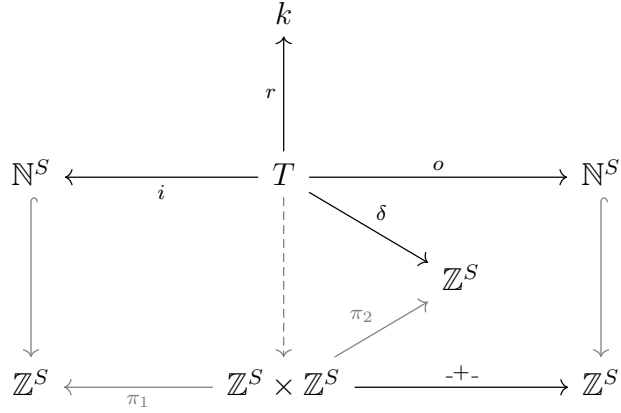
\begin{figure}[h]
\centering
\begin{tikzcd}
& &k & & \\
& & & & \\
\mathbb{N}^S \arrow[dd,hookrightarrow,gray]& &T \arrow[uu,"r"] \arrow[ll,"i"] \arrow[rr,"o"] \arrow[dr,"\delta"] \arrow[dd,gray,dashed]& &\mathbb{N}^S \arrow[dd,hookrightarrow,gray]\\
& & &\mathbb{Z}^S & \\
\mathbb{Z}^S& &\mathbb{Z}^S \times \mathbb{Z}^S \arrow[ll,"\pi_1",gray] \arrow[ur,"\pi_2",gray] \arrow[rr,"\_ + \_"]& & \mathbb{Z}^S
\end{tikzcd}
\caption{We notate the free commutative monoid on $S$ as $\mathbb{N}^S$, and the free commutative group as $\mathbb{Z}^S$. The grey arrows are either universal, or are canonical embeddings. We consider $\mathbb{Z}^S$ to be a group object, with multiplication $\mathbb{Z}^S \times \mathbb{Z}^S \overset{\_ + \_}{\rightarrow} \mathbb{Z}^S$}
\label{dynamicalsystem}
\end{figure}

 The arrow $T \overset{\delta}{\rightarrow} \mathbb{Z}^S$ assigns to each transition the \textbf{net change} in species that results after a single firing of that transition. Collectively, this is precisely the data required for the Law of Mass Action as specified in \cite{baez_compositional_2017}.\\

\begin{defn}[Law of Mass Action]
Given a labelled petri net with rates, the Law of Mass Action is specified by four families of maps\footnote{all canonical or close enough.}.
\begin{enumerate}
\item{The monoid-homomorphism embedding from the free monoid on $S$ to the polynomial ring $k[S]$, $\mathbb{N}^S \hookrightarrow k[S]$, which is (punning) identity on objects, and maps monoid multiplication on $\mathbb{N}^S$ to polynomial multiplication in $k[S]$. Used in Figure \ref{growinglegs}.}
\item{The ring-homomorphism embedding $k \hookrightarrow k[S]$. Used in Figure \ref{growinglegs}.}
\item{The `forgetful' embedding from the hom-object $k[S]^T$ into the free monoid $\mathbb{N}^{k[S]}$, obtained by counting arrows into each $k[S]$ in the graph of $f: T \rightarrow k[S]$. Used in Figure \ref{TSstage2}.}
\item{The embedding $k[S]^S \hookrightarrow (Tk^S)^{(k^S)}$ which maps $S$-indexed assignments $f: S \rightarrow k[S]$ to maps $g: k^S \rightarrow Tk^S$, by specifying for each $\sigma_i \in S$ that $\dot{\sigma_i} := f_i \in k[S]$. We choose a basis ordering for $Tk^S$, viewed as a vector space, so $(\dot{\sigma_1},\dot{\sigma_2},\cdots,\dot{\sigma}_{|S|}) := (f_1,f_2,\cdots,f_{|S|})$ defines the map $k^S \rightarrow Tk^S$. Used in Figure \ref{TSstage3}.}
\end{enumerate}

The presence of these maps allows us to recover a unique continuous dynamical system on $k^S$ from a labelled petri net with rates, following Figures \ref{growinglegs}, \ref{spanlegsmerge}, \ref{TSstage}, \ref{TSstage2}, and \ref{TSstage3}.

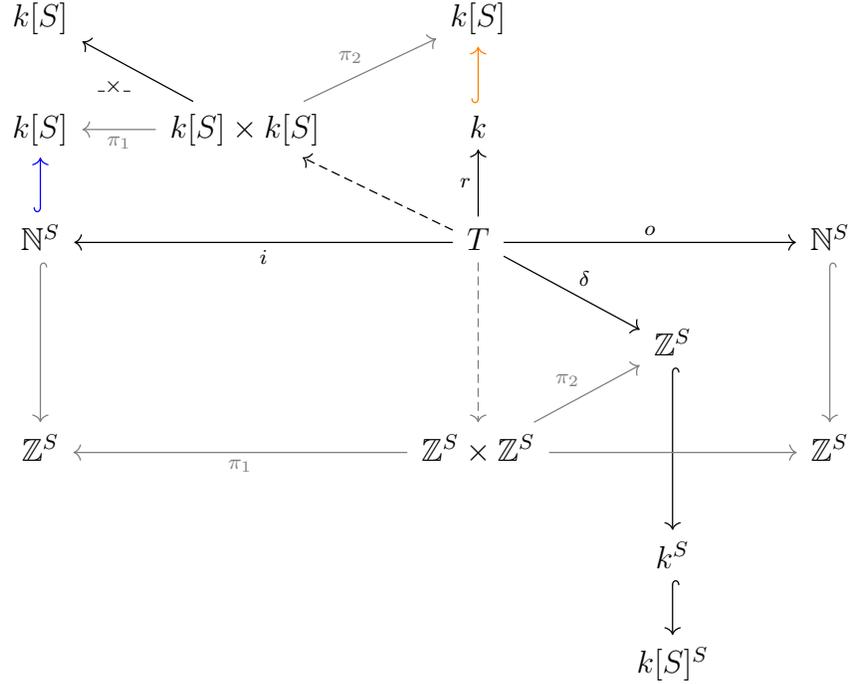
\begin{figure}[h]
\centering
\begin{tikzcd}
k[S]& &k[S] & & \\
k[S] &k[S] \times k[S] \arrow[l,"\pi_1",gray]  \arrow[ur,"\pi_2",gray] \arrow[ul,"\_ \times \_"]&k \arrow[u,hookrightarrow,orange] & & \\
\mathbb{N}^S \arrow[u,hookrightarrow,blue] \arrow[dd,hookrightarrow,gray]& &T \arrow[ul,dashed] \arrow[u,"r"] \arrow[ll,"i"] \arrow[rr,"o"] \arrow[dr,"\delta"] \arrow[dd,gray,dashed]& &\mathbb{N}^S \arrow[dd,hookrightarrow,gray]\\
& & &\mathbb{Z}^S \arrow[dd,hookrightarrow] & \\
\mathbb{Z}^S& &\mathbb{Z}^S \times \mathbb{Z}^S \arrow[ll,"\pi_1",gray] \arrow[ur,"\pi_2",gray] \arrow[rr,gray]& & \mathbb{Z}^S\\
& & &k^S \arrow[d,hookrightarrow] & \\
& & &k[S]^S &
\end{tikzcd}
\caption{Map 1 (in blue) and map 2 (in orange), in conjunction with the polynomial muliplication map $k[S] \times k[S] \overset{\_ \times \_}{\rightarrow} k[S]$ and a sequence of embeddings $\mathbb{Z}^S \hookrightarrow k^S \hookrightarrow k[S]^S$ lifted from the sequence of canonical ring-homomorphism embeddings $\mathbb{Z} \hookrightarrow k \hookrightarrow k[S]$ allow us to obtain a span $k[S] \leftarrow T \rightarrow k[S]^S$.}
\label{growinglegs}
\end{figure}

\begin{figure}[h]
\centering
\begin{tikzcd}
& T \arrow[dl,"c"] \arrow[dr,"d"] & \\
k[S]& &k[S]^S
\end{tikzcd}
\caption{We label the legs of the span $c$, for `coefficient', and $d$, for `difference'. By construction each transition $\tau$ is assigned a polynomial rate coefficient equal to the product of that transition's rate constant $r(\tau)$ and a polynomial product of species $\sigma \in S$ each raised to the power of its number of occurrences in the input of $\tau$. We call this the \textbf{mass-respecting rate}. Further, each transition is assigned a change vector $o(\tau) - i(\tau) \in \mathbb{Z}^S \hookrightarrow k[S]^S$, whose $j^{th}$ component is the net change in the $j^{th}$ species $\sigma_j \in S$ resulting from a single firing of transition $\tau$.}
\label{spanlegsmerge}
\end{figure}
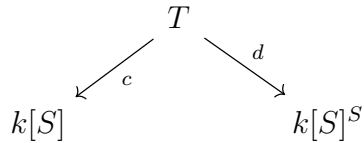

\begin{figure}[h]
\centering
\begin{tikzcd}
& T \times S \arrow[d,dashed] \arrow[dl,"c \circ \pi_1"] \arrow[dr,"ev \circ \langle d . 1_S \rangle"] & \\
k[S]& k[S] \times k[S] \arrow[l,gray,"\pi_1"] \arrow[r,gray,"\pi_2"] \arrow[d,"\_ \times \_"]&k[S]\\
& k[S] &
\end{tikzcd}
\caption{Taking advantage of the presence of exponentials in Set, we obtain the above commuting map by uncurrying $k[S]^S$. Postcomposing by the polynomial multiplication map yields a morphism $T \times S \rightarrow k[S]$, which maps a (transition, state variable) pair $(\tau,\sigma)$ to the net change in $\sigma$ resulting from $\tau$, multiplied by the appropriate mass-respecting rate.}
\label{TSstage}
\end{figure}

\begin{figure}[h]
\centering
\begin{tikzcd}
S \arrow[r] & k[S]^T \arrow[r,blue,hookrightarrow] & \mathbb{N}^{k[S]} \arrow[r,"+"] & k[S]
\end{tikzcd}
\caption{Currying $T$ and postcomposing with map 3 (in blue) allows us to reach the evaluation morphism for the polynomial addition algebra. The resulting morphism $S \rightarrow k[S]$ forgets the contribution of individual transitions, having summed them together. Each species $\sigma$ is assigned its sum change due to all transitions in $T$.}
\label{TSstage2}
\end{figure}

\begin{figure}[h]
\centering
\begin{tikzcd}
\{\star\} \arrow[r] & k[S]^S \arrow[r,blue,hookrightarrow] & {(Tk^S)}^{(k^S)}
\end{tikzcd}
\caption{Finally, currying $S$ and applying map 4 (in blue) yields the desired element. We may uncurry $k^S$ to obtain the dynamical system $k^S \rightarrow Tk^S$.}
\label{TSstage3}
\end{figure}

\end{defn}

\clearpage

\begin{rem}

Though our interest is not in investigating the compositionality of these structures, we mention here that our labelled petri nets with rates may compose cheaply by taking coproducts over the spans obtained in Figure \ref{TSstage} as in Figure \ref{composingspanlegs}. It is easy to verify that the required diagrams as in Figure \ref{dynamicalsystem} commute, since all objects in the diagram are reachable from $T$, such that the resulting petri net with rates is $T+T' \overset{i + i'}{\rightarrow} \mathbb{N}^S$, $T+T' \overset{o + o'}{\rightarrow} \mathbb{N}^S$, $T+T' \overset{r + r'}{\rightarrow} \mathbb{Z}^S$.

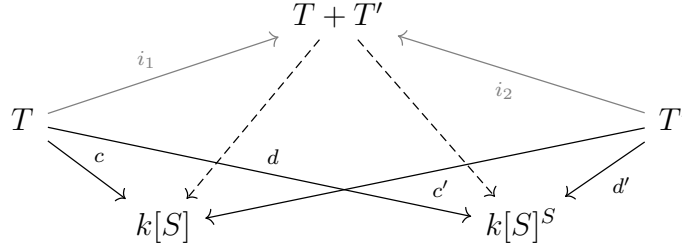
\begin{figure}[h]
\centering
\begin{tikzcd}
& & T + T' \arrow[ddr,dashed] \arrow[ddl,dashed]& &\\
T \arrow[urr,"i_1",gray] \arrow[dr,"c"] \arrow[drrr,"d"]& & & &T' \arrow[dl,"d'"] \arrow[dlll,"c'"] \arrow[ull,"i_2",gray]\\
&k[S] & &k[S]^S &
\end{tikzcd}
\caption{The composition of mass-respecting spans by coproduct. The natural isomorphisms that come wtih the coproduct make this impoverished notion of composition commutative.}
\label{composingspanlegs}
\end{figure}

\end{rem}

\subsection{A (limited) Universality Result}

Assume an arbitrary but fixed finite set of state variables $S$. For any $S$-indexed collection of polynomials in $\mathbb{R}[S]$, we obtain a system of ordinary first-order differential equations with polynomial expressions in $\mathbb{R}[S]$ for each $\dot{\sigma}$ ($\sigma \in S$), and thus dynamical system $\mathcal{D}$ in one variable over state space $\mathbb{R}[S]$. Let $\mathcal{D}(\sigma)$ denote the polynomial in $\mathbb{R}[S]$ that defines $\dot{\sigma}$ for the dynamical system $\mathcal{D}$. Let $\mathcal{P}$ denote a labelled petri net with rates, and $\mathcal{M}\mathcal{P}$ denote the dynamical system obtained from $\mathcal{P}$ by the mass-action procedure defined previously.

\begin{prop}
If for all $\sigma \in S$, $\mathcal{D}(\sigma)$ is a polynomial in $\mathbb{R}[S]$ such that every additive term with a negative coefficent contains a positive integer power of $\sigma$, then there exists a (not necessarily unique) $\mathcal{P}$ such that $\mathcal{M}\mathcal{P} \equiv \mathcal{D}$.
\label{thm1}
\end{prop}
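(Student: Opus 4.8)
\section*{Proof proposal}

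The plan is to build $\mathcal{P}$ by manufacturing one dedicated transition for every monomial term occurring in each target polynomial $\mathcal{D}(\sigma)$, engineered so that the transition injects its term into exactly one equation and perturbs no other. Recall that unwinding Figures \ref{spanlegsmerge}--\ref{TSstage3} yields the explicit mass-action formula
$$\mathcal{MP}(\sigma) \;=\; \sum_{\tau\in T}\bigl(o(\tau)-i(\tau)\bigr)_\sigma \, r(\tau)\prod_{\rho\in S}\rho^{\,i(\tau)_\rho},$$
so that a transition $\tau$ contributes to $\dot\sigma$ the monomial $\prod_\rho \rho^{i(\tau)_\rho}$ (its mass-respecting rate, read off the coefficient leg $c$), weighted by the integer net change $(o-i)_\sigma$ (read off the difference leg $d$) and the rate constant $r(\tau)$.

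First I would fix, for each $\sigma\in S$, the monomial decomposition $\mathcal{D}(\sigma)=\sum_\alpha c_{\sigma,\alpha}\,x^\alpha$, where $\alpha$ ranges over a finite set of multi-indices in $\mathbb{N}^S$, $x^\alpha := \prod_\rho \rho^{\alpha_\rho}$, and each $c_{\sigma,\alpha}\in\mathbb{R}\setminus\{0\}$. For every such pair $(\sigma,\alpha)$ I introduce a transition $\tau_{\sigma,\alpha}$ with input $i(\tau_{\sigma,\alpha})=\alpha$, so its mass-respecting monomial is exactly $x^\alpha$. It then remains only to choose the output and the rate so that the weight $(o-i)_\sigma\, r$ equals $c_{\sigma,\alpha}$ and the net change vanishes in every other coordinate.

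Next I split on the sign of the coefficient. If $c_{\sigma,\alpha}>0$, set $o(\tau_{\sigma,\alpha})=\alpha+e_\sigma$ (net production of one $\sigma$) and $r(\tau_{\sigma,\alpha})=c_{\sigma,\alpha}>0$. If $c_{\sigma,\alpha}<0$, set $o(\tau_{\sigma,\alpha})=\alpha-e_\sigma$ (net consumption of one $\sigma$) and $r(\tau_{\sigma,\alpha})=-c_{\sigma,\alpha}>0$. The consumption case is legal precisely because the hypothesis forces $\alpha_\sigma\ge 1$ whenever $c_{\sigma,\alpha}<0$, so $\alpha-e_\sigma$ still lies in $\mathbb{N}^S$ and names a genuine output multiset. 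In both cases the net change is $\pm e_\sigma$, supported on a single coordinate, so $\tau_{\sigma,\alpha}$ contributes $c_{\sigma,\alpha}x^\alpha$ to $\dot\sigma$ and $0$ to every $\dot\rho$ with $\rho\neq\sigma$. Summing the formula over all transitions, the double sum collapses to $\mathcal{MP}(\rho)=\sum_\alpha c_{\rho,\alpha}x^\alpha=\mathcal{D}(\rho)$ for each $\rho$, which is the desired $\mathcal{MP}\equiv\mathcal{D}$.

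The main obstacle, and indeed the whole content of the hypothesis, is the sign bookkeeping forced by the requirement that outputs inhabit $\mathbb{N}^S$ while rate constants are kept positive. A positive term is always realisable (create one copy of $\sigma$), but under positive rates a negative contribution to $\dot\sigma$ can only come from a transition that \emph{consumes} a copy of $\sigma$, which in turn demands that $\sigma$ appear in that transition's input monomial with positive multiplicity; this is exactly the condition that every negative-coefficient term of $\mathcal{D}(\sigma)$ carry a positive power of $\sigma$. In particular it correctly rules out a negative constant term in $\dot\sigma$, which is genuinely unrealisable this way. Non-uniqueness is then transparent: one may split any term across several transitions, or trade a net change of $n$ against a rate of $c/n$, so the $\mathcal{P}$ constructed above is merely one preimage among many.
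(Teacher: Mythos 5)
Your construction is essentially the paper's own proof: one transition per monomial term, with input multiset equal to the term's exponent vector and output differing from it by $\pm e_\sigma$, so that the hypothesis is invoked exactly where you invoke it, namely to keep $\alpha - e_\sigma$ inside $\mathbb{N}^S$ in the negative-coefficient case. If anything your sign bookkeeping is cleaner than the paper's, which sets $r(\tau_i) := r_i$ with its sign intact \emph{and} flips the net change to $-1$ for negative terms, so that the product $(o-i)_\sigma\, r(\tau_i)$ formally evaluates to $|r_i|$ rather than $r_i$; your choice of a positive rate $r = -c_{\sigma,\alpha}$ paired with net consumption avoids that double negation.
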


\begin{proof}
We provide a constructive proof. We pick some $\sigma \in S$ such that $\mathcal{D}(\sigma) \neq 0$ (or else we are done, by setting $T = \{\star\}$, $r: \{\star\} \mapsto 0$, and $i,o: \{\star\} \mapsto 0$).\\

Without loss of generality, we write: $$\mathcal{D}(\sigma) \equiv \sum\limits_{i} (r_i \prod\limits_{\sigma_j\in J_i \subseteq S}\sigma_j^{n(\sigma_j)})$$ where $r_i \in \mathbb{R}$, and $n(\sigma_j)$ is the positive integer power of the $\sigma_j$ occurrence. For each term $r_i \prod\limits_{\sigma_j\in J_i \subseteq S}\sigma_j^{n(\sigma_j)}$, create a transition label $\tau_i$, and expand $r,i,o$ by the following data:

\begin{align*}
&r(\tau_i) := r_i \\
&i(\tau_i) := \begin{cases} \sigma_k \mapsto n(\sigma_k) & (k \in J_i) \\ \sigma_k \mapsto 0 & (\text{otherwise}) \end{cases} \ : \mathbb{N}^S\\
&o(\tau_i) := \begin{cases} \sigma' \mapsto i(\tau_i)(\sigma') & (\sigma' \neq \sigma) \\
 \sigma \mapsto i(\tau_i)(\sigma) + 1 & (r_i > 0) \\  
 \sigma \mapsto i(\tau_i)(\sigma) - 1 & (r_i < 0)
 \end{cases} \ : \mathbb{N}^S
\end{align*}

By construction, each $\tau_i$s contribution to the net change of every state variable apart from the chosen $\sigma$ is 0, and it is easy to calculate that the outcome of the mass-action procedure for a petri net consisting of $\tau_i$ (associated data) alone yields $\mathcal{M}\mathcal{P}(\sigma) = \mathcal{D}(\sigma)$.
\end{proof}

\subsection{Modelling modelling assumptions}

Oftentimes in practice, one wishes to employ modelling assumptions, most typically in a form that expresses certain state variables in terms of others. These assumptions simplify the model by reducing the dimension of the state space.\\

For instance, where the system in question is assumed to conserve some quantity such as the sum concentration of two species $A + B$, we might wish to define $B := l - A$ for some limiting constant $l$.\\

It may be that rates for processes are dependent on `derivative' quantities. For instance, rates for some basic processes in the Hodgkin-Huxley model \cite{hodgkin_quantitative_1952} of action-potential propagation in neurons are proportional to the voltage available in the system, which is best expressed as the \emph{difference} in the available sodium and potassium ions.

\begin{defn}[Modelling Assumption]
A \textbf{modelling assumption} for a labelled petri net with rates with state variables $S$ consists of:
\begin{itemize}
\item{A choice of subset $S' \subset S$ of variables targeted for removal}
\item{An assignment of interpretations $S' \rightarrow k[(S-S')]$}
\end{itemize}
\end{defn}

Tacitly, when a modelling assumption of this kind is made, the practitioner is \emph{throwing away} the input and output edges for each of the state variables they have removed: by defining voltage to be difference in ions $V := N - K$, the modeller ceases to consider the contributing effect to $\dot{V}$ from the system, as $V$ and $\dot{V}$ are already fully determined. Removing transitions in this manner reminiscent of Pearlian intervention in structural causal models \cite{jacobs_causal_2019}.\\

\begin{defn}[Labelled Petri Net with Rates and Modelling Assumptions]
Given a labelled petri net with rates and modelling assumptions, we may obtain a new labelled petri net with rates as in Figure \ref{newpetri}.\\

The modellers' assignment choice $S' \rightarrow k[S-S']$ induces a polynomial ring morphism $k[S] \rightarrow k[S-S']$ by extending $S' \rightarrow k[S-S']$ with identity maps on $S-S'$ to obtain $S \rightarrow k[S-S']$, which lifts to $k[S] \rightarrow k[k[S-S']] = k[S-S']$. These are the orange maps in the diagram.\\

The blue maps are obtained by mapping occurrences of $\sigma \in S'$ to the additive identity, dependent on the modeller's choice $S'$.

\begin{figure}[h!]
\centering
\begin{tikzcd}
k[S-S']& &k[S-S'] & & \\
& k[S-S'] \times k[S-S'] \arrow[ul,"\_ \times \_"] \arrow[dl,gray,"\pi_1"] \arrow[ur,gray,"\pi_2"]&k[S]  \arrow[u,->>,orange] & & \\
k[S-S']&k[S] \arrow[l,->>,orange] &k \arrow[u,hookrightarrow] & & \\
\mathbb{N}^{(S-S')} \arrow[dd,hookrightarrow,gray]& \mathbb{N}^{S} \arrow[u,hookrightarrow] \arrow[l,->>,blue] &T \arrow[uul,dashed] \arrow[u,"r"] \arrow[l,"i"] \arrow[r,"o"] \arrow[dr,"\delta"] \arrow[dd,gray,dashed]& \mathbb{N}^S \arrow[r,->>,blue] &\mathbb{N}^{(S-S')} \arrow[dd,hookrightarrow,gray]\\
& & &\mathbb{Z}^{(S-S')} \arrow[dd,hookrightarrow] & \\
\mathbb{Z}^{(S-S')}& &\mathbb{Z}^{(S-S')} \times \mathbb{Z}^{(S-S')} \arrow[ll,"\pi_1",gray] \arrow[ur,"\pi_2",gray] \arrow[rr,gray]& & \mathbb{Z}^{(S-S')}\\
& & &k^{(S-S')} \arrow[d,hookrightarrow] & \\
& & &k[S]^{(S-S')} &
\end{tikzcd}
\caption{The blue maps effectively cut all input and output edges to all species in $S'$ in the final net, in a manner that does not interfere with the computation of the upper half of the diagram, which assigns mass-respecting rates according to the original edges.}
\label{newpetri}
\end{figure}
\end{defn}

\clearpage

\subsection{A Universality Result}

Recall that the commutative ring of rational functions in variables $S$ over a commutative ring $k$ -- which we shall notate with the boldface $\mathbf{k}[S]$ -- consists of expressions of the form $\frac{p}{q}$, where $p, q$ are polynomial expressions in $k[S]$, and $q \neq 0$. Let $\mathcal{D}$ again denote a dynamical system over state space $k^S$, let $\mathcal{P}_Y$ denote a petri net with rates over variables $X$, and let $\mathcal{A}X$ denote a collection of modelling assumptions with rational function expressions: \emph{i.e.} given some $X \subset Y$, an assignment $(Y-X) \rightarrow \mathbf{k}[X]$. Let $\mathcal{M}\mathcal{P}_{Y}^{\mathcal{A}X}$ denote the dynamical system obtained by the mass-action procedure on some net $\mathcal{P}_Y$ followed by modelling assumptions $\mathcal{A}X$.

\begin{thm}[mass-action and assumptions are universal for rational function dynamical systems]
For any dynamical system $\mathcal{D}$ over $k^X$, such that for all $\sigma \in X$, $\mathcal{D}(\sigma) \in \mathbf{k}[X]$, there exists a (not necessarily unique) petri net with rates $\mathcal{P}_Y$ over a superset of variables $Y \supseteq X$ along with assumptions $\mathcal{A}X$ such that $\mathcal{D} \equiv \mathcal{M}\mathcal{P}_{Y}^{\mathcal{A}X}$.
\label{thm2}
\end{thm}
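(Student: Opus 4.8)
The plan is to reduce the rational-function case to the polynomial case already settled in Proposition \ref{thm1}, and then use a single modelling assumption to reinstate denominators after the fact. First I would dispose of the species on which $\mathcal{D}$ vanishes: wherever $\mathcal{D}(\sigma) = 0$ I attach no transitions touching $\sigma$, exactly as in the base case of Proposition \ref{thm1}. For each surviving $\sigma \in X$ I write $\mathcal{D}(\sigma) = p_\sigma/q_\sigma$ with $p_\sigma, q_\sigma \in k[X]$ and $q_\sigma \neq 0$; since $k$ has characteristic $0$ the ring $k[X]$ is an integral domain, so $\sigma q_\sigma \neq 0$ and the rational function $p_\sigma/(\sigma q_\sigma)$ is a well-defined element of $\mathbf{k}[X]$.

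The load-bearing manoeuvre is the factorisation $\mathcal{D}(\sigma) = \sigma \cdot (p_\sigma/(\sigma q_\sigma))$. I would introduce one fresh auxiliary variable $z_\sigma$ per surviving species, set $Y := X \sqcup \{z_\sigma\}$, and let the modelling assumption $\mathcal{A}X$ be $z_\sigma \mapsto p_\sigma/(\sigma q_\sigma) \in \mathbf{k}[X]$, which the construction of Figure \ref{newpetri} promotes to a substitution ring morphism $k[Y] \to \mathbf{k}[X]$. Over $Y$ I then only need to realise the \emph{polynomial} system $\mathcal{D}'$ given by $\mathcal{D}'(\sigma) = \sigma z_\sigma$ for $\sigma \in X$ and $\mathcal{D}'(z_\sigma) = 0$. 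The purpose of pulling out the factor $\sigma$ is that each $\sigma z_\sigma$ is a single monomial with coefficient $+1$, so $\mathcal{D}'$ trivially satisfies the sign hypothesis of Proposition \ref{thm1}: there are no negative terms to contain a power of the target species. Concretely each $\sigma$ gets the single transition with input $\sigma + z_\sigma$, output $2\sigma + z_\sigma$ and rate $1$, so that $z_\sigma$ acts as a catalyst of net change $0$; this explicit transition realises $\dot\sigma = \sigma z_\sigma$ over an arbitrary base field, without needing any ordering. Collecting these (targeted, non-interfering) transitions yields a petri net $\mathcal{P}_Y$ with $\mathcal{M}\mathcal{P}_Y \equiv \mathcal{D}'$.

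It then remains to check that applying $\mathcal{A}X$ to $\mathcal{P}_Y$ recovers $\mathcal{D}$. Because the substitution $k[Y] \to \mathbf{k}[X]$ is a ring homomorphism it commutes with the sum-over-transitions and product-of-concentrations defining the mass-action rate, so I may substitute directly into the computed $\dot\sigma$: the mass-respecting rate $\sigma z_\sigma$ maps to $\sigma \cdot p_\sigma/(\sigma q_\sigma) = p_\sigma/q_\sigma$, while the blue edge-cutting maps of Figure \ref{newpetri} project the change vector onto $\mathbb{Z}^X$ and leave its $\sigma$-component equal to $+1$. Hence $\dot\sigma = (+1)\cdot p_\sigma/q_\sigma = \mathcal{D}(\sigma)$, and every $z_\sigma$ has been removed from the state space, so that $\mathcal{M}\mathcal{P}_Y^{\mathcal{A}X} \equiv \mathcal{D}$.

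I expect the main obstacle to be bookkeeping rather than conceptual: I must verify that the modelling-assumption construction of Figure \ref{newpetri}, originally set up for polynomial assignments inducing $k[S] \to k[S-S']$, behaves as expected when the assignment lands in the larger ring $\mathbf{k}[X]$ of rational functions, and in particular that the composite $\mathcal{M}\mathcal{P}_Y^{\mathcal{A}X}$ is genuinely computed as ``take the polynomial mass-action dynamics over $Y$, then substitute''. The one genuinely essential idea is the factorisation $\mathcal{D}(\sigma) = \sigma\cdot(\mathcal{D}(\sigma)/\sigma)$, which is what sidesteps the sign restriction of Proposition \ref{thm1} entirely: an arbitrary numerator $p_\sigma$ could contribute negative terms free of $\sigma$, but after extracting one factor of $\sigma$ the polynomial to be realised over $Y$ is always a bare positive monomial.
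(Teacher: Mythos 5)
Your proof is correct, and it reaches the same destination by a noticeably coarser decomposition than the paper's. The paper works term by term: it expands $\mathcal{D}(\sigma)$ as a sum of monomials over denominators $\rho_i$, introduces one auxiliary catalyst $\gamma_i \mapsto 1/\rho_i$ \emph{per denominator per term}, patches the sign hypothesis of Proposition \ref{thm1} locally by multiplying only the offending negative $\sigma$-free terms by $\sigma/\sigma$, and then genuinely invokes Proposition \ref{thm1} on the resulting polynomial system $\sum_i r_i \prod_j \sigma_j^{n(\sigma_j)}\gamma_i$. You instead absorb the \emph{entire} right-hand side into a single auxiliary $z_\sigma \mapsto p_\sigma/(\sigma q_\sigma)$ per species, so the polynomial system to realise collapses to the lone positive monomial $\sigma z_\sigma$ and Proposition \ref{thm1} becomes unnecessary --- your explicit catalytic transition $\sigma + z_\sigma \to 2\sigma + z_\sigma$ at rate $1$ does the job directly. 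What your version buys is economy: fewer auxiliaries, no case analysis on signs, no reliance on the earlier proposition. What it costs is that the resulting petri net is maximally uninformative --- all of the structure of $\mathcal{D}$ is hidden inside the modelling assumption rather than in transitions, whereas the paper's term-by-term net at least keeps the monomial structure and the rate constants $r_i$ visible as distinct processes, which is closer to the spirit of the paper (though for a bare existence statement this is immaterial). Both arguments share the same mild caveat, so it is not a gap relative to the paper's own standard: the identification $p_\sigma/q_\sigma \equiv \sigma p_\sigma/(\sigma q_\sigma)$ holds in the ring of fractions $\mathbf{k}[X]$ but not pointwise on the locus $\sigma = 0$; the paper's $\sigma/\sigma$ rewrite incurs exactly the same debt.
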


\begin{proof}
We proceed similarly to Proposition \ref{thm1}, this time creating novel state variables and assumptions as required to handle expressions in the denominator.\\

We pick some $\sigma \in S$ such that $\mathcal{D}(\sigma) \neq 0$ (or else we are done, by setting $T = \{\star\}$, $r: \{\star\} \mapsto 0$, and $i,o: \{\star\} \mapsto 0$).\\

Without loss of generality, we write: $$\mathcal{D}(\sigma) \equiv \sum\limits_{i} (\frac{r_i \prod\limits_{\sigma_j\in J_i \subseteq S}\sigma_j^{n(\sigma_j)}}{\rho_i})$$ where $r_i \in \mathbb{R}$, $n(\sigma_j)$ is the positive integer power of the $\sigma_j$ occurrence, and $\rho_i \in k[X]$. We ask for a rewrite in the special missing case of Proposition \ref{thm1}: terms $\mathcal{D}(\sigma)$ with negative coefficient that do not contain any occurrences of $\sigma$ are to be multiplied by $\frac{\sigma}{\sigma}$.\\

After re-expression, for each $\rho_i \in \mathbf{k}[X]$, create a novel state variable $\gamma_i$, and extend the assumptions with the assignment $\gamma_i \mapsto \frac{1}{\rho_i}$. Doing this for all $\sigma \in X$, we may construct a new dynamical system $\mathcal{D}'$ over state variables $Y \supset X$ extended by the $\gamma$s, such that:

$$\mathcal{D}'(\sigma) \equiv \sum\limits_{i} r_i \prod\limits_{\sigma_j\in J_i \subseteq S}\sigma_j^{n(\sigma_j)}\gamma_i$$

For each $\gamma$, set $\mathcal{D}'(\gamma) := 0$. Applying Proposition \ref{thm1} now yields a petri net $\mathcal{P}_Y$ such that $\mathcal{M}\mathcal{P}_Y \equiv \mathcal{D}'$, and $\mathcal{M}\mathcal{P}_Y^{\mathcal{A}X} \equiv \mathcal{D}$.
\end{proof}

\clearpage

\section{Elementary Dynamical Systems, and a tentative Graphical Calculus}

In this section, we develop a graphical calculus that implements the conversion a labelled petri net with rates into a continuous dynamical system, along with modelling assumptions, which agrees with our specification in the previous section. We proceed by example, recovering some well known dynamical systems on one variable, including a novel derivation of the Allee Effect. Throughout, we make light use of an informal notation reminiscent of linear logic to specify the inputs and outputs of transitions in text.

\subsection{Exponential Growth}

Suppose we have an asexually reproducing thing $T$. We might capture this setup by a single atomic process $T \multimap T \otimes T$, glossed as \texttt{"A thing turning into two things."} Our only state variable is $T$, with a net change $(T + T) - T = T$ at rate proportional to $T$, say by a constant positive coefficient $r$. We can depict this system as in Figure \ref{fig:exp}.

\begin{figure}[h]
\centering
\begin{minipage}{.3\textwidth}
  \centering
  \includegraphics[width=0.8\linewidth]{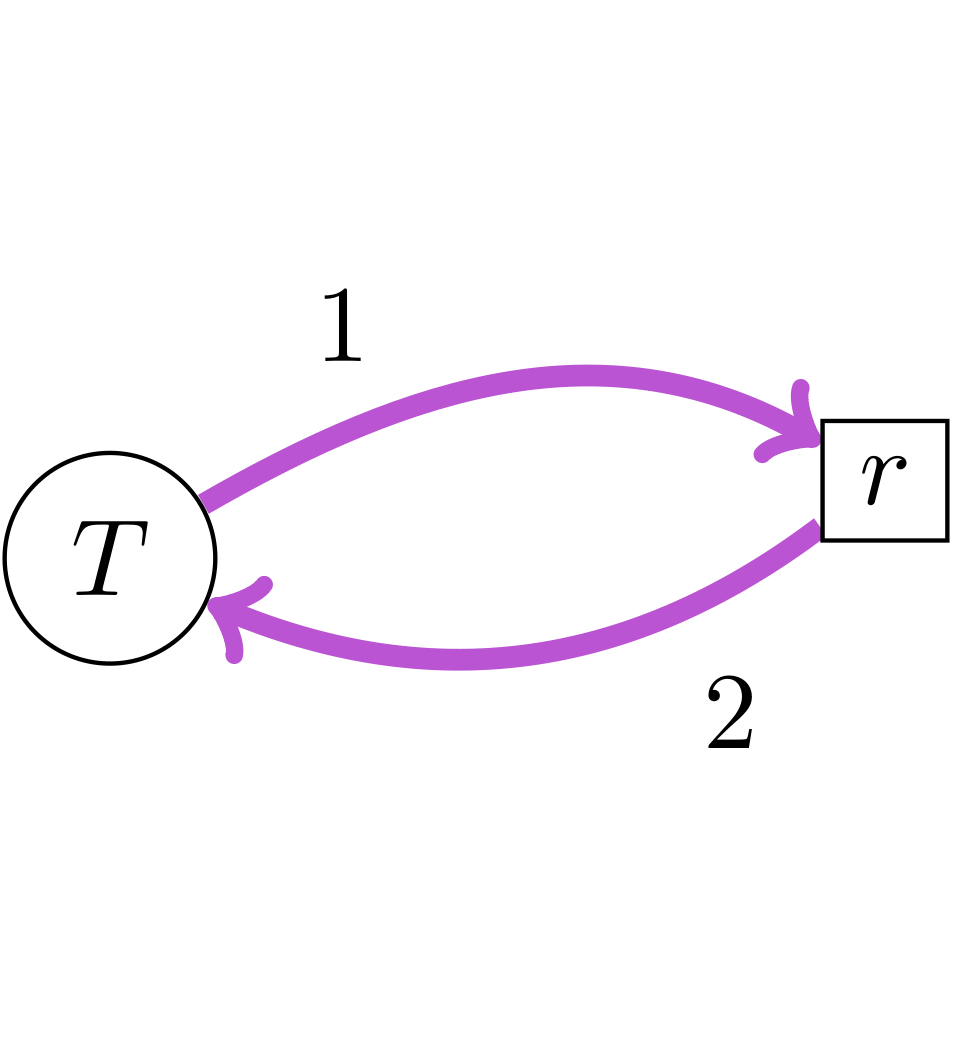}
  \captionof{figure}{}
  \label{fig:exp}{}{}
\end{minipage}
\begin{minipage}{.3\textwidth}
  \centering
  \includegraphics[width=.8\linewidth]{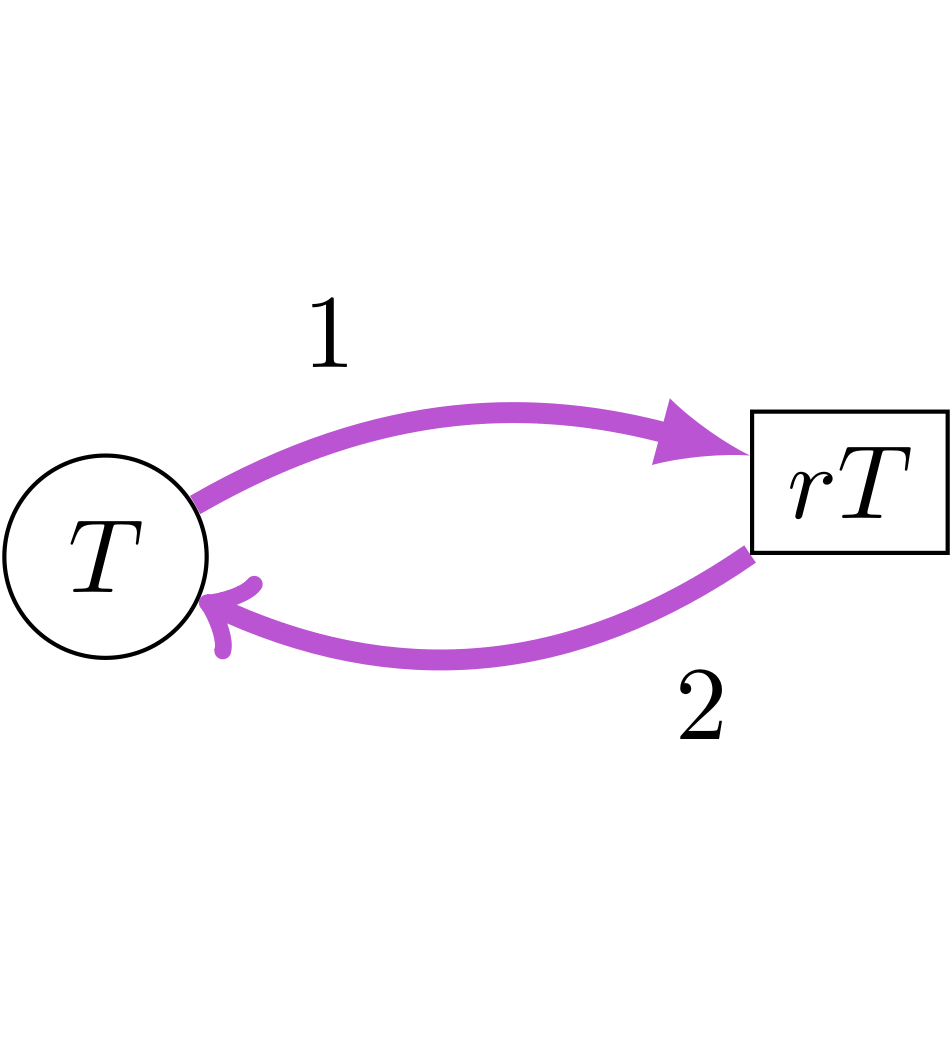}
  \captionof{figure}{}
  \label{fig:exp2}{}{}
\end{minipage}
\begin{minipage}{.3\textwidth}
  \centering
  \includegraphics[width=.8\linewidth]{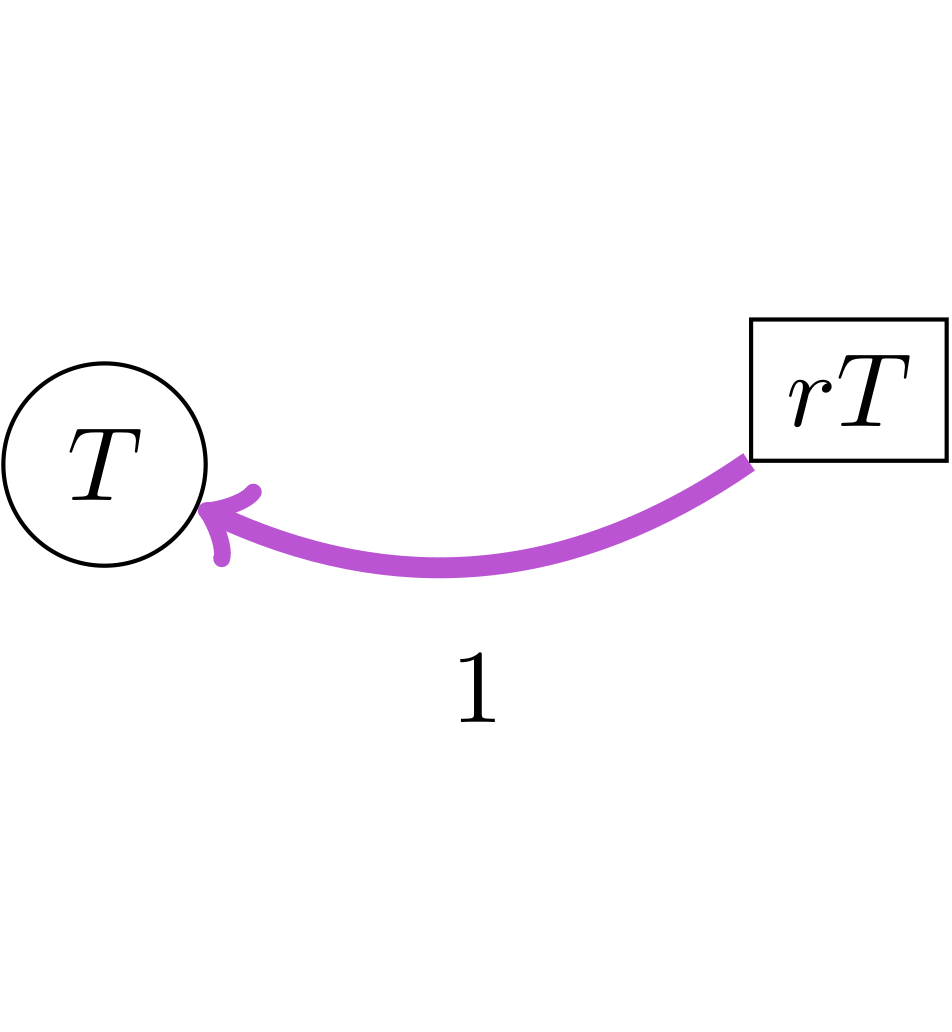}
  \captionof{figure}{}
  \label{fig:exp3}{}{}
\end{minipage}
\caption{We label transitions with boxes decorated by their rate variables, 
  and we label arrows with positive integer multiplicity, as in Figure \ref{fig:exp}. We compute mass-respecting rates for 
  transitions by pushing through species with power equal to mulitplicity of input, 
  marking the arrows we have done, as in Figure \ref{fig:exp2}. Finally, we cancel pairs 
  of arrows of the same colour but going in opposite directions to obtain net change, as in Figure \ref{fig:exp3}.}
\end{figure}

We can read off the dynamical system from the final diagram in Figure \ref{fig:exp3} by setting $\dot{T}$ to be the sum of boxes with edges ingoing into $T$.

$$\dot{T} = rT$$

The analytical solution of the system is $e^{rt}$ ($t$ for time) plus a boundary condition constant $c$. So we have exponential growth, and this is unsurprising, since we have asked for the growth of $T$ to be proportional to itself.\\

Notably, if we vary the starting assumption to idealised sexual reproduction, where the atomic process requires two inputs to produce some number of outputs: \emph{i.e.} $T \otimes T \multimap T \otimes \ldots \otimes T$, the qualititative behaviour of the resulting system remains the same: unstable equilibrium at $T = 0$, and positive growth for $T > 0$.

\clearpage

\subsection{Logistic Growth}

Suppose we amend the previous example with some additional modelling assumptions. If we take two atomic processes to model our system, namely reproduction: \texttt{"Two things meet and make a new thing"} $$T \otimes T \multimap T \otimes T \otimes T$$

and death: \texttt{"A thing disappears"}

$$T \multimap 0$$

Assigning positive coefficients $r$ for reproduction and $d$ for death, we obtain the diagram in Figure \ref{fig:log}. We notate each transition with a different colour.

\begin{figure}[h]
\centering
\begin{minipage}{.3\textwidth}
  \centering
  \includegraphics[width=0.8\linewidth]{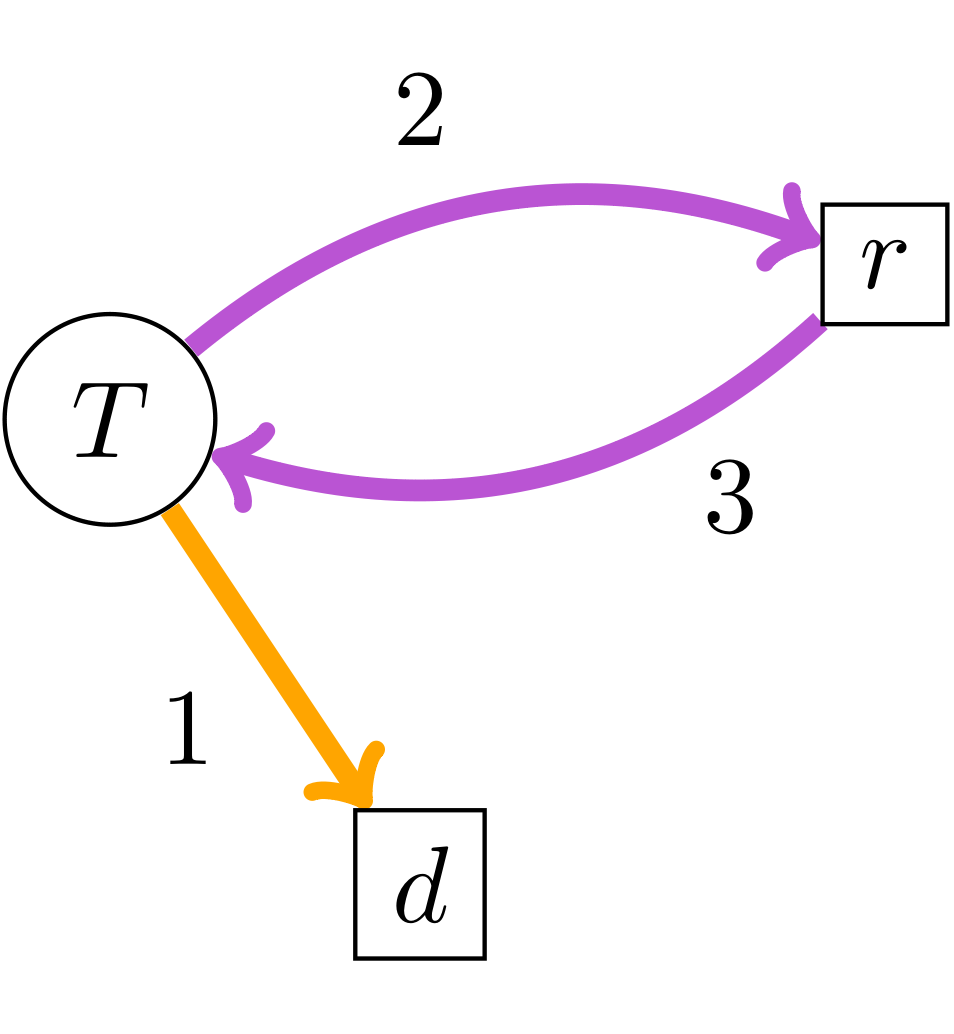}
  \captionof{figure}{}
  \label{fig:log}{}{}
\end{minipage}
\begin{minipage}{.3\textwidth}
  \centering
  \includegraphics[width=.8\linewidth]{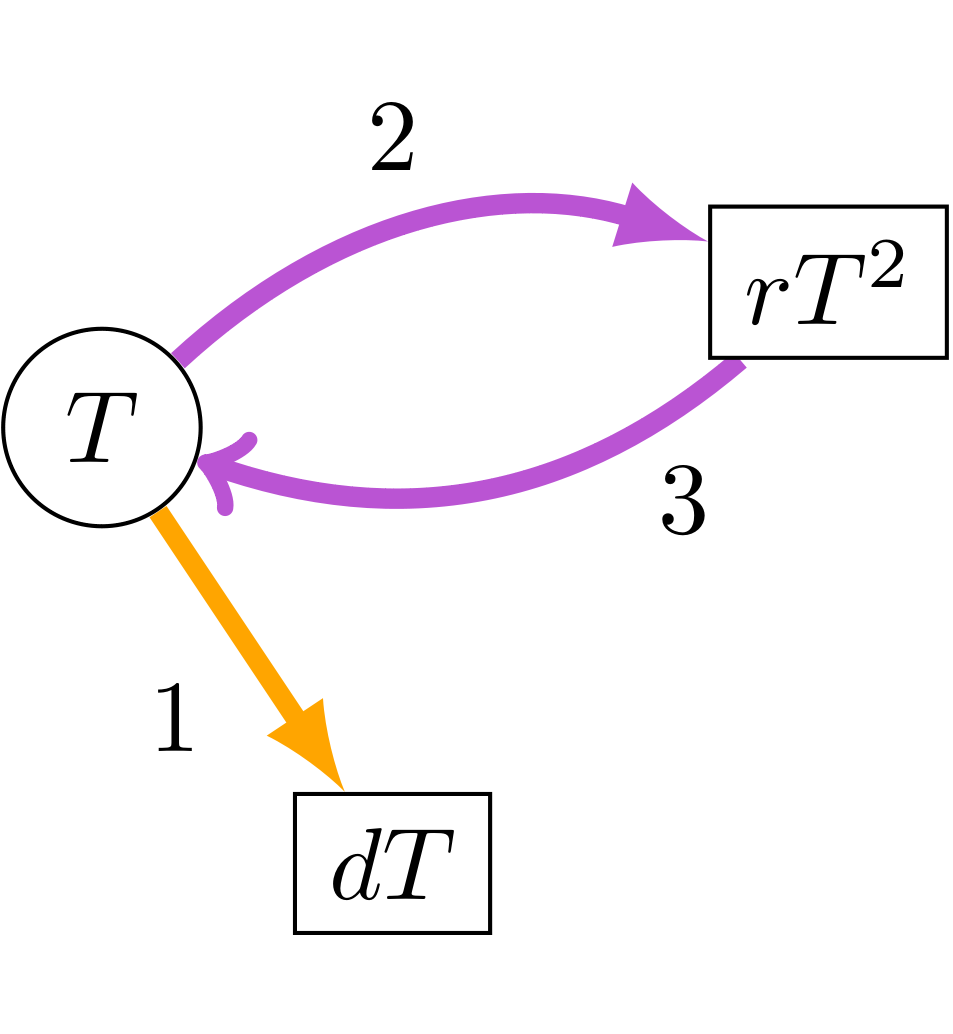}
  \captionof{figure}{}
  \label{fig:log2}{}{}
\end{minipage}
\begin{minipage}{.3\textwidth}
  \centering
  \includegraphics[width=.8\linewidth]{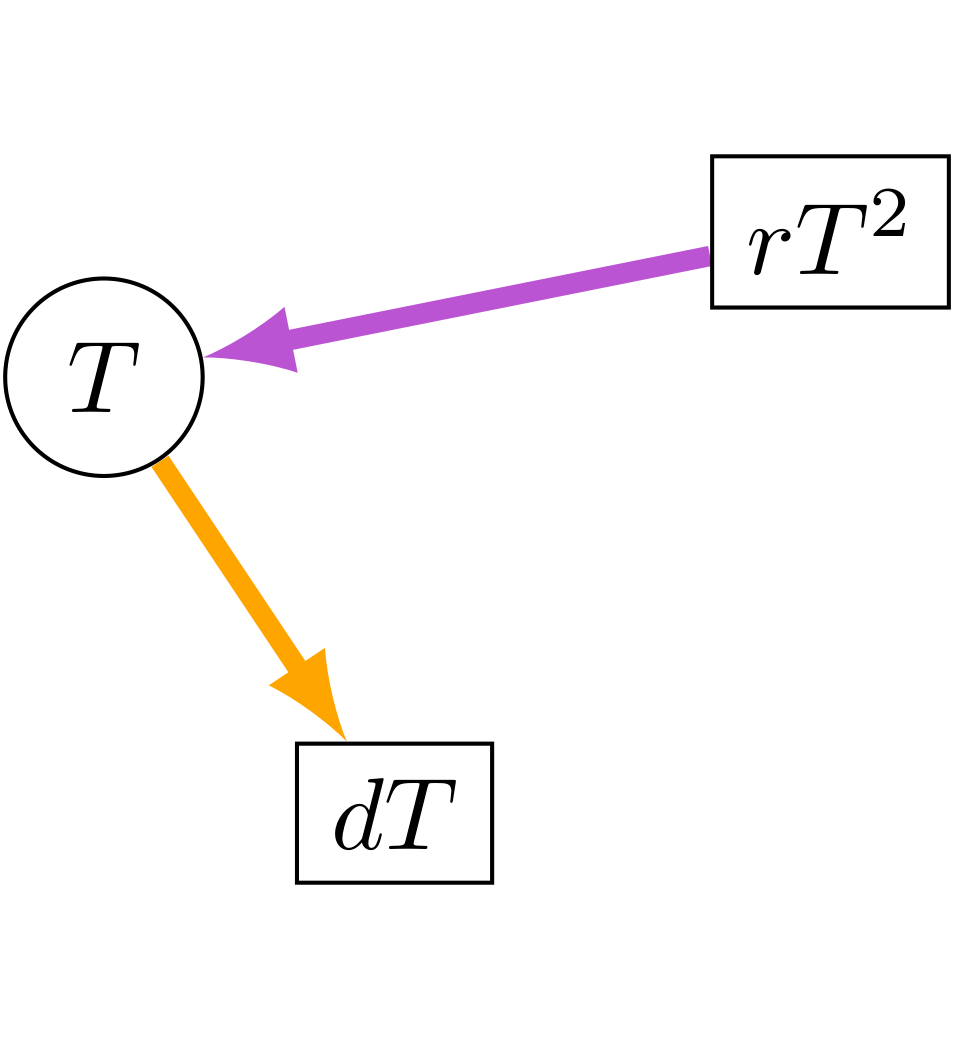}
  \captionof{figure}{}
  \label{fig:log3}{}{}
\end{minipage}
\caption{We obtain the desired result by following the same procedure. 
First pushing through species to obtain mass-respecting rates in Figure \ref{fig:log2},
then cancelling pairs of arrows to obtain Figure \ref{fig:log3}.}
\end{figure}

We can read off the expression for $\dot{T}$ by adding the boxes with incoming, 
and subtracting those with outgoing edges. We obtain from reproduction a positive contribution of magnitude $rT^2$ to $\dot{T}$, and from death a negative contribution of magnitude $dT$ to $\dot{T}$. So the resulting dynamical system is expressed:

$$\dot{T} = rT^2 - dT = rT(\frac{d}{r}-T)$$

So we obtain logistic growth. Notably, we do not obtain logistic growth in this way if we revert the sexual reproduction process to asexual reproduction, which is commonsense: if reproduction and death are both solitary events for each member of the species, the stronger of the two forces results in plain exponential growth (or decay).

\subsubsection{Logistic Growth From Finite Energy}

Now assume asexual reproduction, but amend the process such that reproduction requires energy $E$ available in the thing's environment. We might capture this setup by a single atomic process, which we would gloss as \texttt{"a thing comes by some energy, it consumes the energy and produces a new thing."}. The atomic process that models this gloss is one that consumes one energy and one thing, and returns two things: $E \otimes T \multimap T \otimes T$.\\

So, our state variables are $\{E,T\}$. We assign a weight coefficient $r$ to the reproduction process, obtaining the diagram in Figure \ref{fig:logen}. Now say we also wish to assume that energy is scarce -- finite, and non-replenishing. Perhaps there is only some amount $c$ energy available in the system. This is a consequence of the stronger, but natural, assumption of the system being closed: $\dot{T} + \dot{E} = 0 \implies \dot{E} = - \dot{T} \implies E = c - T$. So we adopt the modelling assumption $E \mapsto (c-T)$, which we demonstrate how to apply in the transition between Figures \ref{fig:logen3} and \ref{fig:logen4}.

\begin{figure}[h]
\centering
\begin{minipage}{.22\textwidth}
  \centering
  \includegraphics[width=0.8\linewidth]{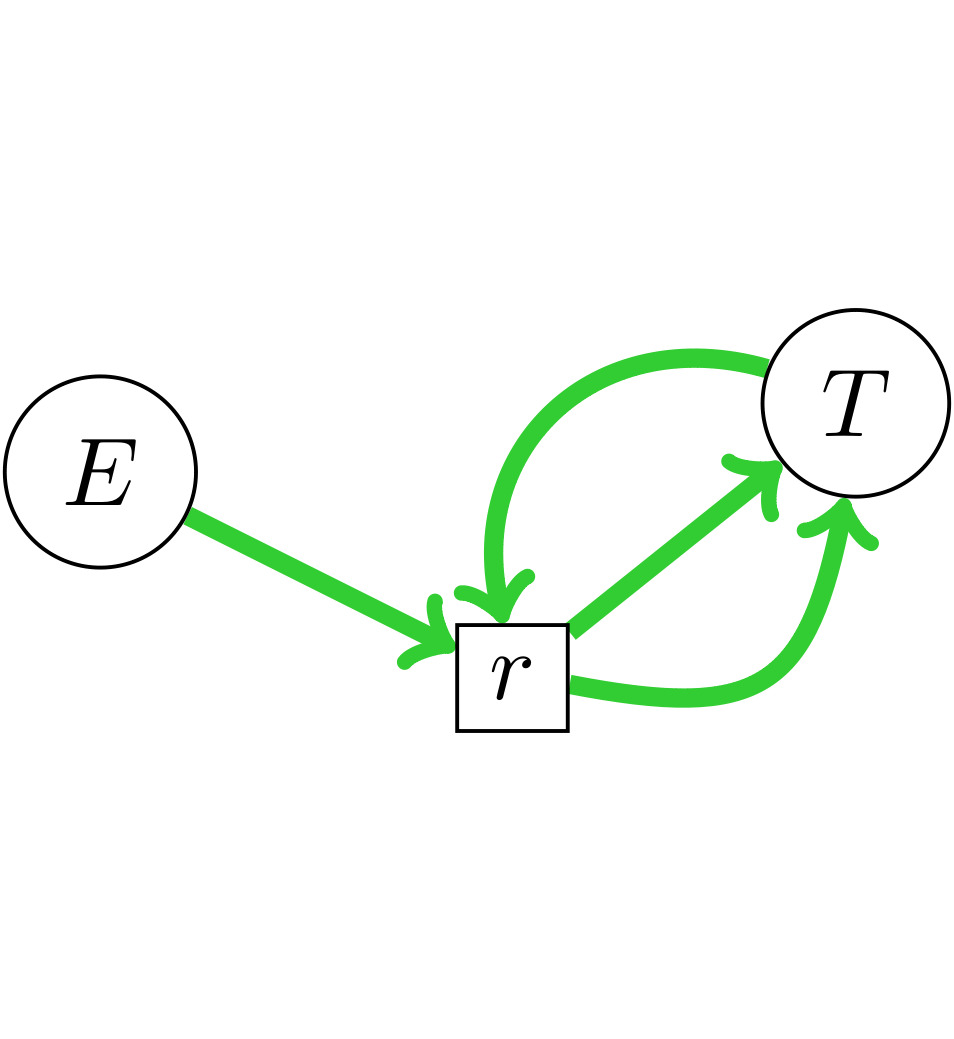}
  \captionof{figure}{}
  \label{fig:logen}{}{}
\end{minipage}
\begin{minipage}{.22\textwidth}
  \centering
  \includegraphics[width=0.8\linewidth]{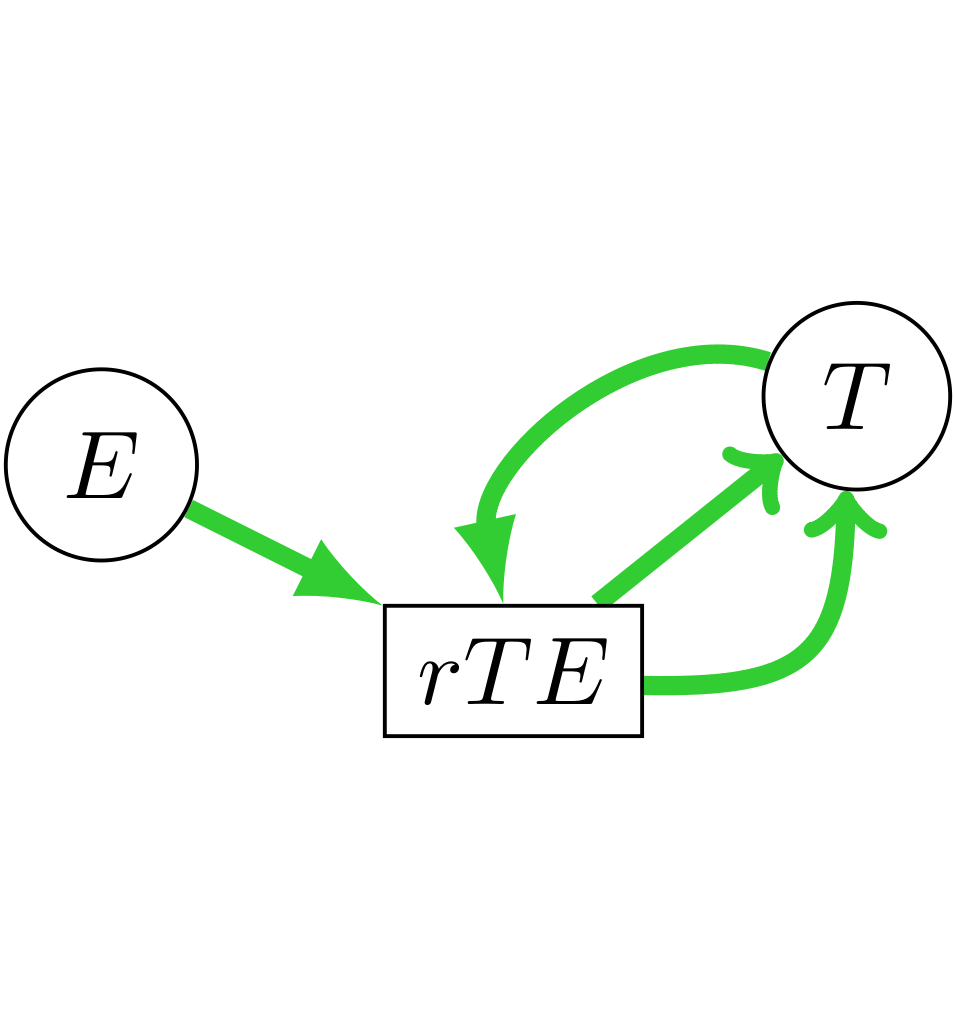}
  \captionof{figure}{}
  \label{fig:logen2}{}{}
\end{minipage}
\begin{minipage}{.22\textwidth}
  \centering
  \includegraphics[width=0.8\linewidth]{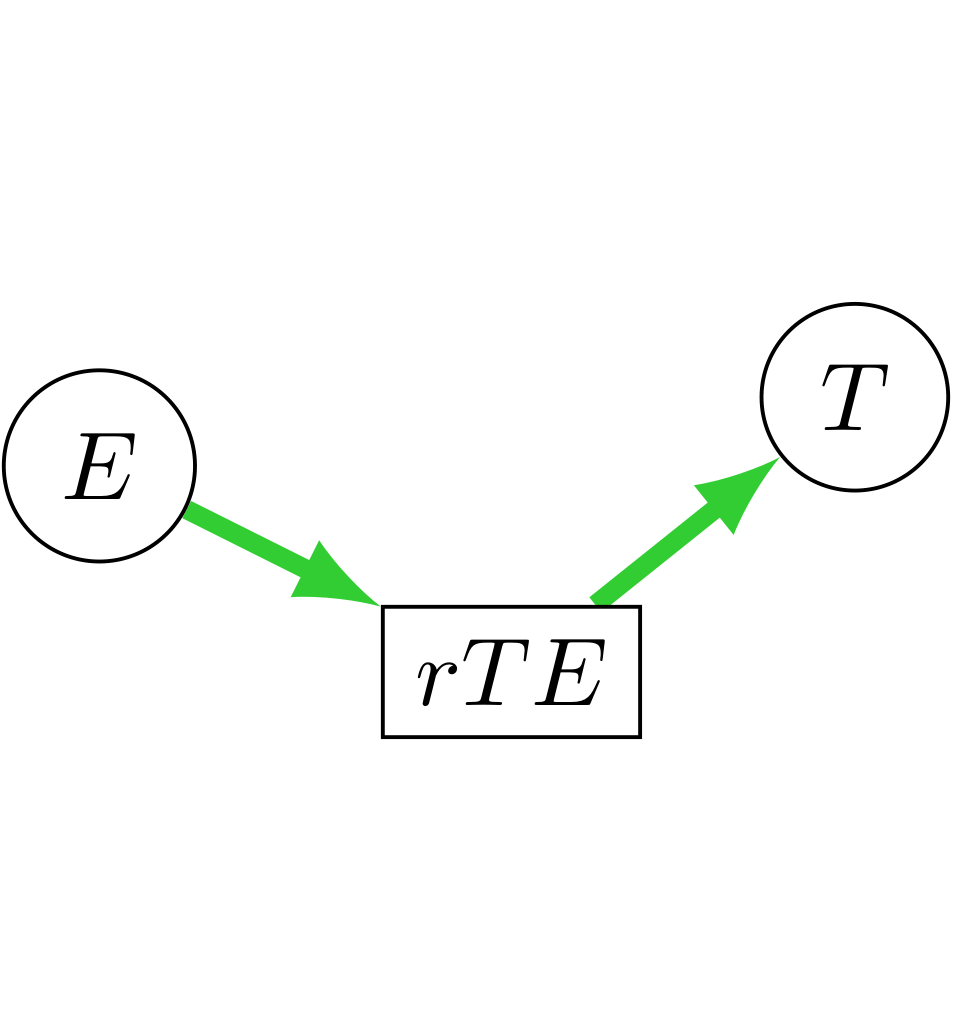}
  \captionof{figure}{}
  \label{fig:logen3}{}{}
\end{minipage}
\begin{minipage}{.22\textwidth}
  \centering
  \includegraphics[width=0.8\linewidth]{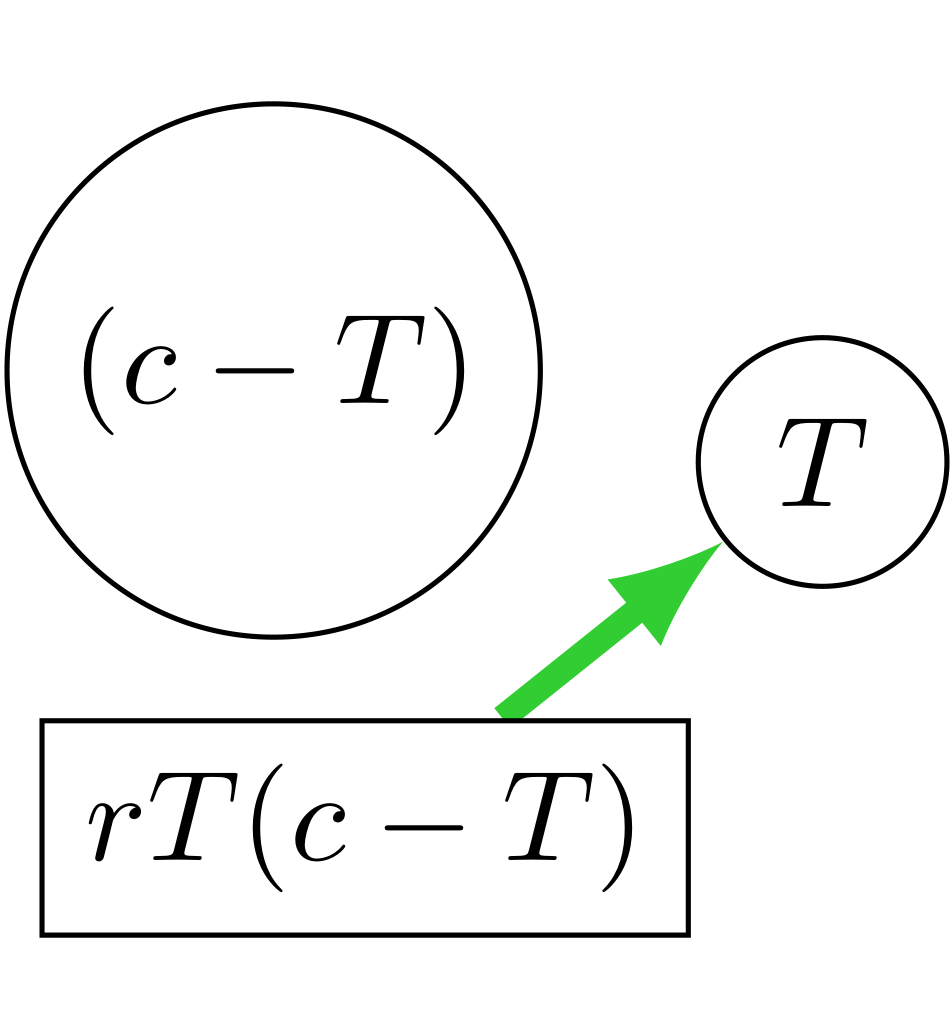}
  \captionof{figure}{}
  \label{fig:logen4}{}{}
\end{minipage}
\caption{We apply the modelling assumption at the last step between 
Figures \ref{fig:logen3} and \ref{fig:logen4} by cutting all incoming and outgoing edges from
the species $E$, then replacing all occurrences of $E$ with $(c-T)$.}
\end{figure}

Reading off the dynamical system from the only arrow left, we have:

$$\dot{T} = rT(c-T)$$

This is again the logistic equation, now with a capacity constant $c$ in place of $\frac{d}{r}$. What is notable here is that the same qualititative behaviour may be obtained from different starting assumptions, which is a challenge to qualitative modelling at large: the mere agreement of empirical data with a dynamical system may still leave the underlying processes unknown. While dynamical systems models may offer good predictions of behaviour of a system in a certain regime, their predictions may be frustrated by a transparent change of underlying processes, as there may be no evident way to transform the model to match the underlying change.

\subsection{(Unbounded) Catalysis}

Once each atomic process has been assigned a rate coefficient, we assume that their aggregate influence on the system is simply additive; framed differently, this is an assumption that the different processes are `disjoint', their relative prevalence only determined by their weights and the Law of Mass Action. However, the rate at which a particular process occurs may not be purely dependent the available input reactants; these rates may depend on the presence of catalysts and inhibitors. A catalyst (\emph{resp.} inhibitor) for a process is a species in the system that is neither consumed nor created by the process, but which the rate of the process depends positively (\emph{resp.} negatively) upon.\\

In what follows up through the Hill equation, we are chiefly concerned with modelling Catalysis and Inhibition `factored through' the Law of Mass Action; this reflects the underlying particle-like assumption that the rates of all processes are determined by just the Law of Mass Action and some modelling assumptions. What we do not permit is the possibility that the concentration of some species appears `naked' in a transition's rate coefficient while not being one of that transition's inputs.\\

The simplest form of catalysis modellable in our framework is an atomic process $A \otimes K \overset{r}{\multimap} B \otimes K$, where $A$ and $B$ are some species possibly with multitude, and $K$, the catalyst, is a single species. The net change in $K$ is 0, but the rate at which $A$ is converted into $B$ is proportional to the available amount of $K$.

\begin{figure}[h]
\centering
\begin{minipage}{.3\textwidth}
  \centering
  \includegraphics[width=0.8\linewidth]{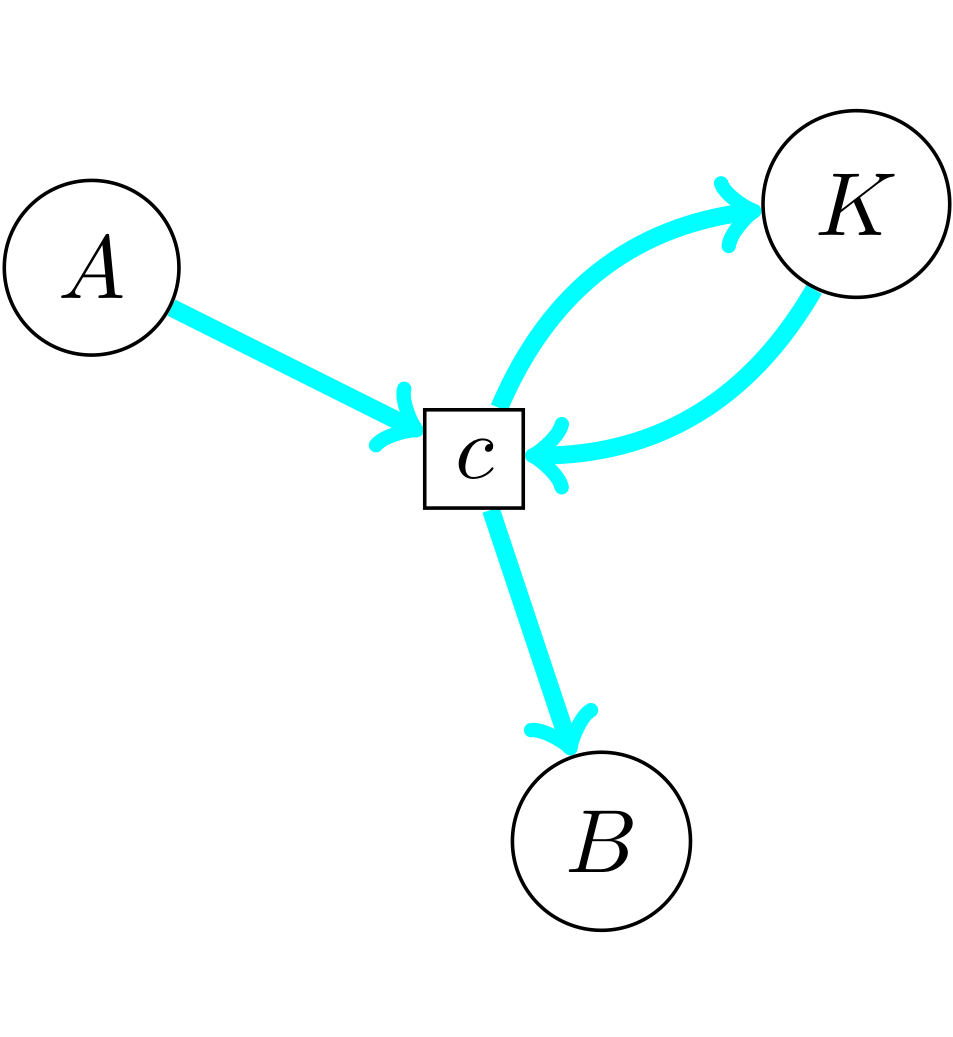}
  \captionof{figure}{}
  \label{fig:catunb}{}{}
\end{minipage}
\begin{minipage}{.3\textwidth}
  \centering
  \includegraphics[width=.8\linewidth]{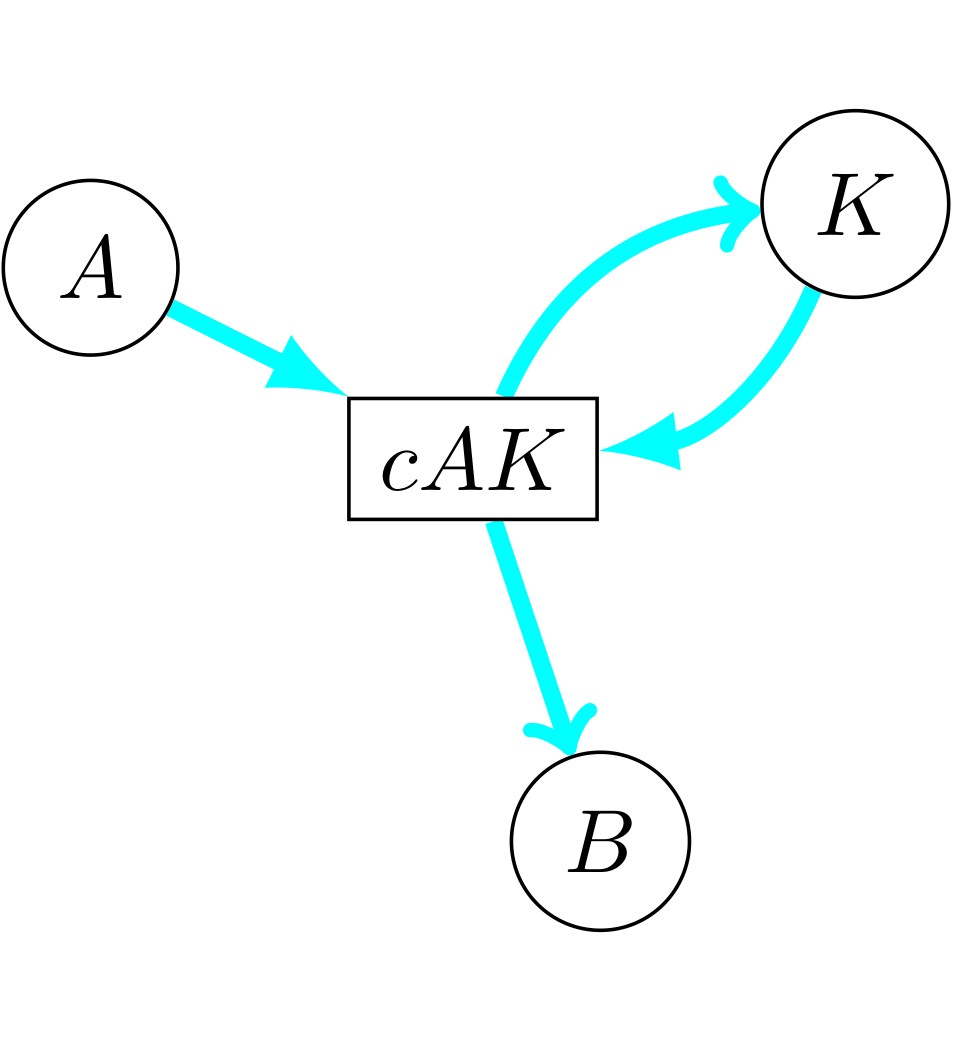}
  \captionof{figure}{}
  \label{fig:catunb2}{}{}
\end{minipage}
\begin{minipage}{.3\textwidth}
  \centering
  \includegraphics[width=.8\linewidth]{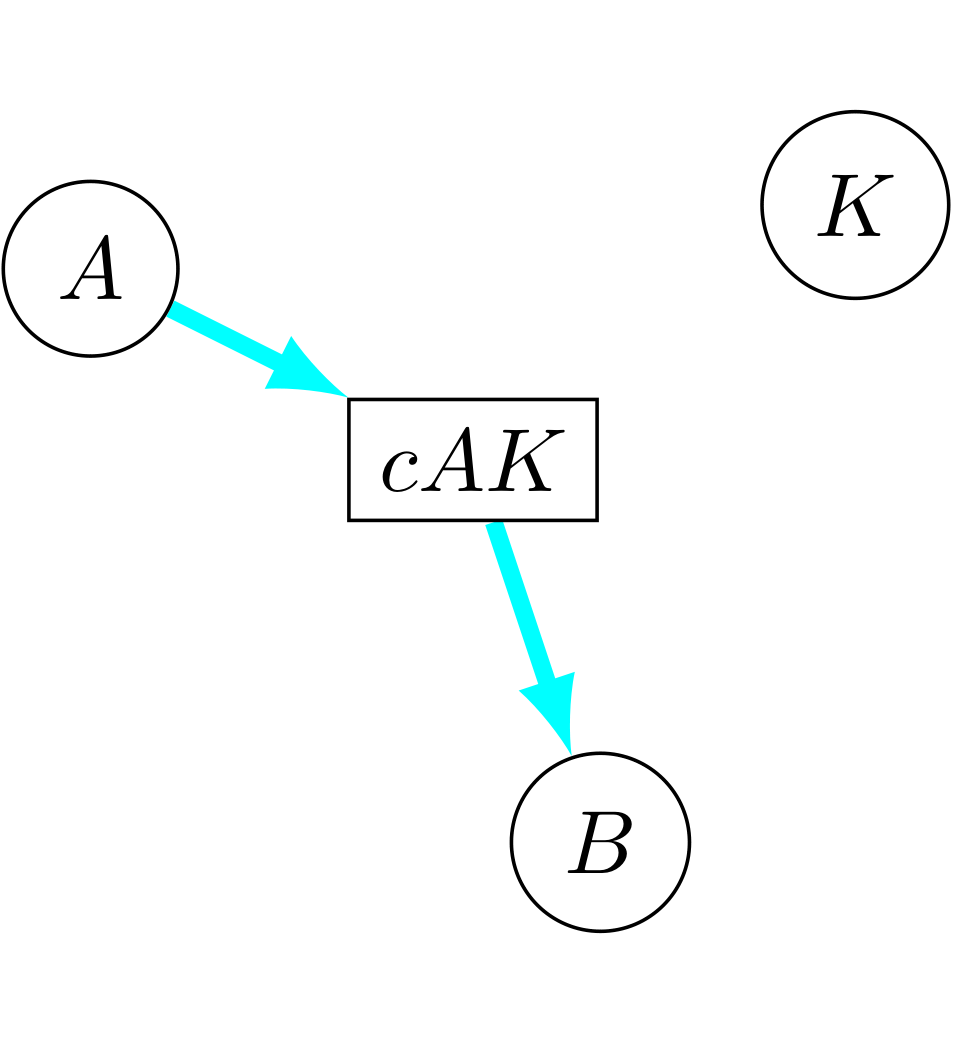}
  \captionof{figure}{}
  \label{fig:catunb3}{}{}
\end{minipage}
\caption{We see that the procedure of cancelling opposing edges allows us to recover a neat graphical derivation of the operational reading of catalysis: the catalyst $K$ appears to affect the rate of a reaction $A \multimap B$ from a distance, as a consequence of enjoying zero net change.}
\end{figure}

\subsection{Hill Equation}

Now some biochemistry. Say that one has a receptor protein $P$ that binds to a ligand $L$, resulting in a system with three species, $\{P_f, P_o, L\}$, where $P_f$ is a protein $P$ with a free active site, $P_o$ a protein with active site occupied by the ligand, and $L$ the ligand. The Hill equation relates the concentration of free ligands in a solution to the proportion $\frac{P_o}{P}$ of bound proteins, as a rectangular hyperbola $\mathbf{R} \mapsto [0,1)$. The form of the Hill equation is:

$$\theta = \frac{|L|^n}{k+|L|^n}$$

Where $\theta$ is the fraction of the receptor protein bound to the ligand, $|L|$ is the free unbound ligand concentration, $k$ is the ``apparent dissociation constant", and $n$ is the Hill coefficient: a free parameter.\\

We can reverse engineer the Hill equation from the following simple process specification. We consider there to be two processes in the system. The first is the binding of a ligand, which we assign an association coefficent $k_a$: $$P_f \otimes L \overset{k_a}{\multimap} P_o$$
The second is the dissociation of a bound ligand, which we assign a dissociation coefficient $k_d$:
$$P_o \overset{k_d}{\multimap} P_f \otimes L$$

We can see what this looks like in Figure \ref{fig:hill}. To obtain the Hill equation, we make two modelling assumptions. First, we assume that the amount of protein in the system is constant. Normalising, we can express this assumption via the assignment $P_f := (1-P_o)$, which we execute in Figure \ref{fig:hill3}. Secondly, we assume that the system is closed in such a way that $P_o \mapsto \mathbf{h}(L)$, \emph{for some} expression $\mathbf{h}$\footnote{we may as well assume that the field we're using is algebraically closed.}. We execute this assumption in Figure \ref{fig:hill4}. Thirdly -- and not a modelling assumption as we have conceived them -- is the assumption that the system is in equilibrium, which is tantamount to saying $\dot{L} = 0$ after the first two assumptions. We revisit this after the diagrams.

\begin{figure}[h]
\centering
\begin{minipage}{.45\textwidth}
  \centering
  \includegraphics[width=0.8\linewidth]{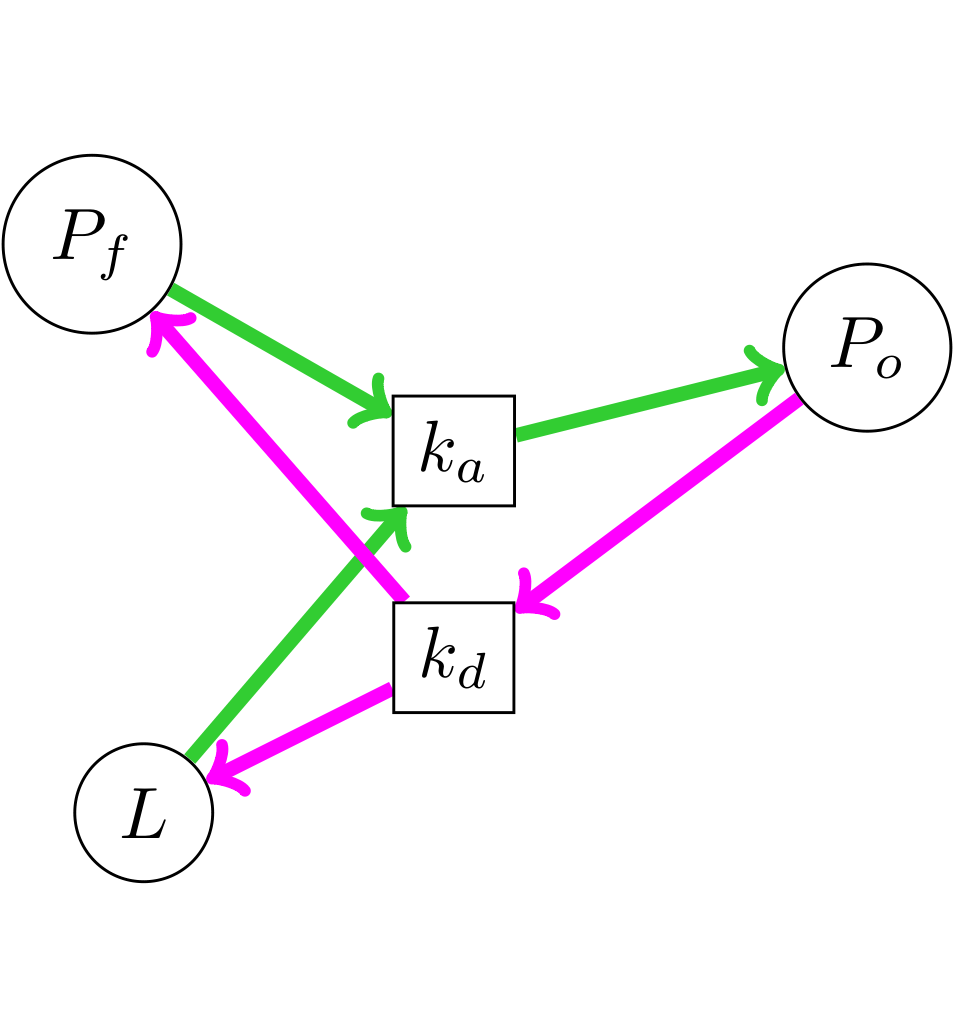}
  \captionof{figure}{}
  \label{fig:hill}{}{}
\end{minipage}
\begin{minipage}{.45\textwidth}
  \centering
  \includegraphics[width=0.8\linewidth]{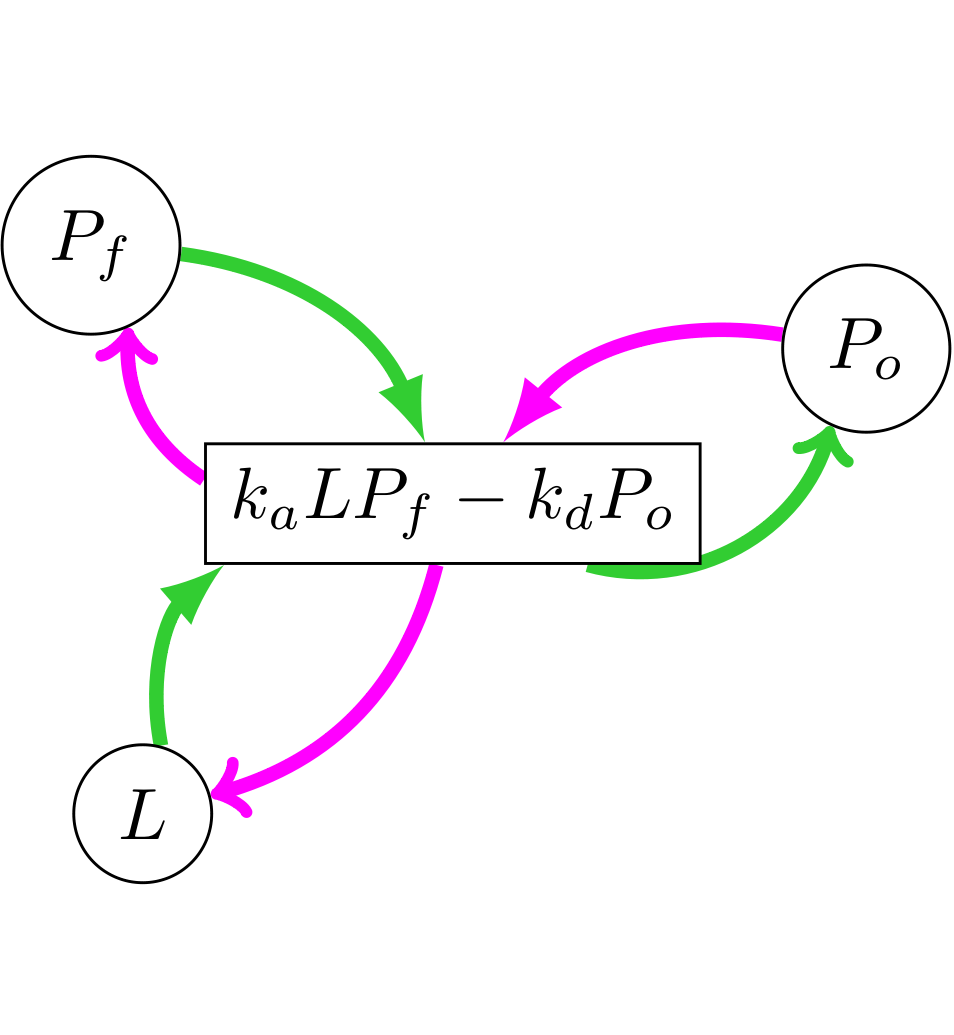}
  \captionof{figure}{}
  \label{fig:hill2}{}{}
\end{minipage}
\begin{minipage}{.45\textwidth}
  \centering
  \includegraphics[width=0.8\linewidth]{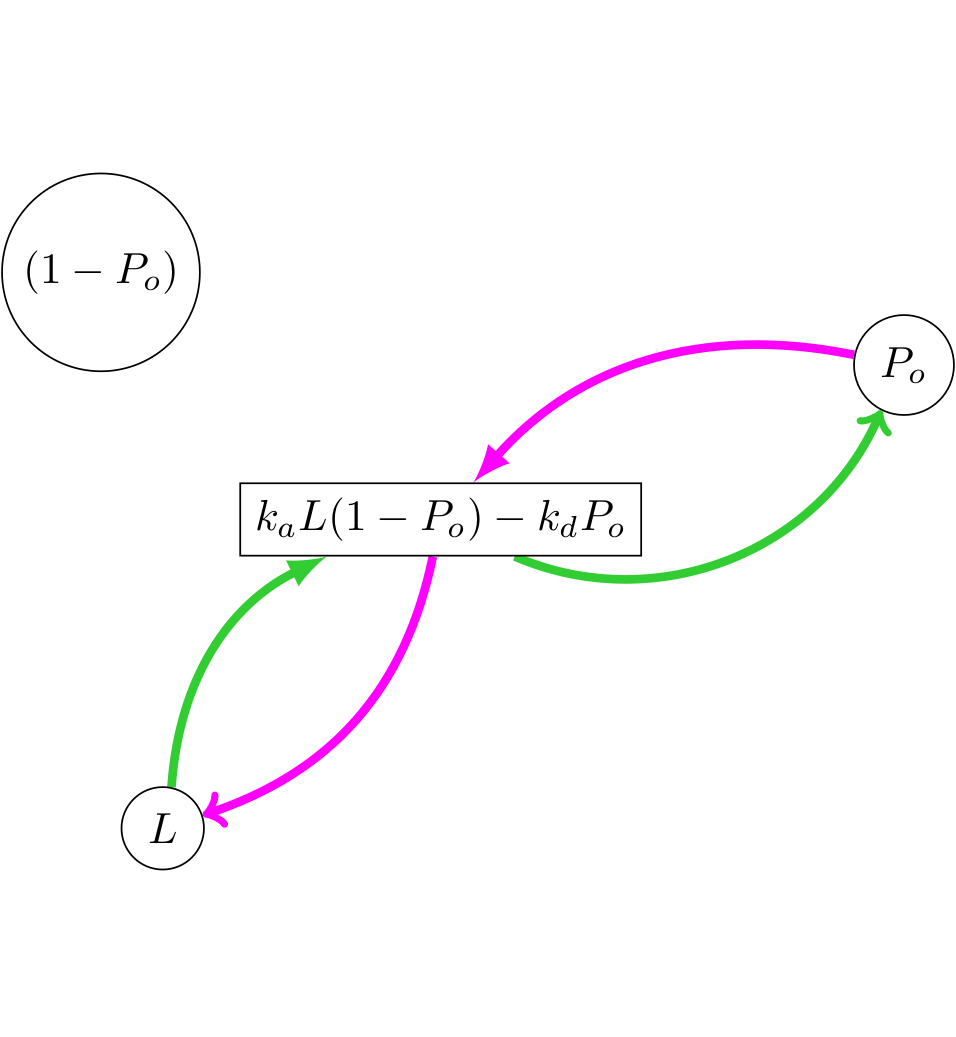}
  \captionof{figure}{}
  \label{fig:hill3}{}{}
\end{minipage}
\begin{minipage}{.45\textwidth}
  \centering
  \includegraphics[width=0.8\linewidth]{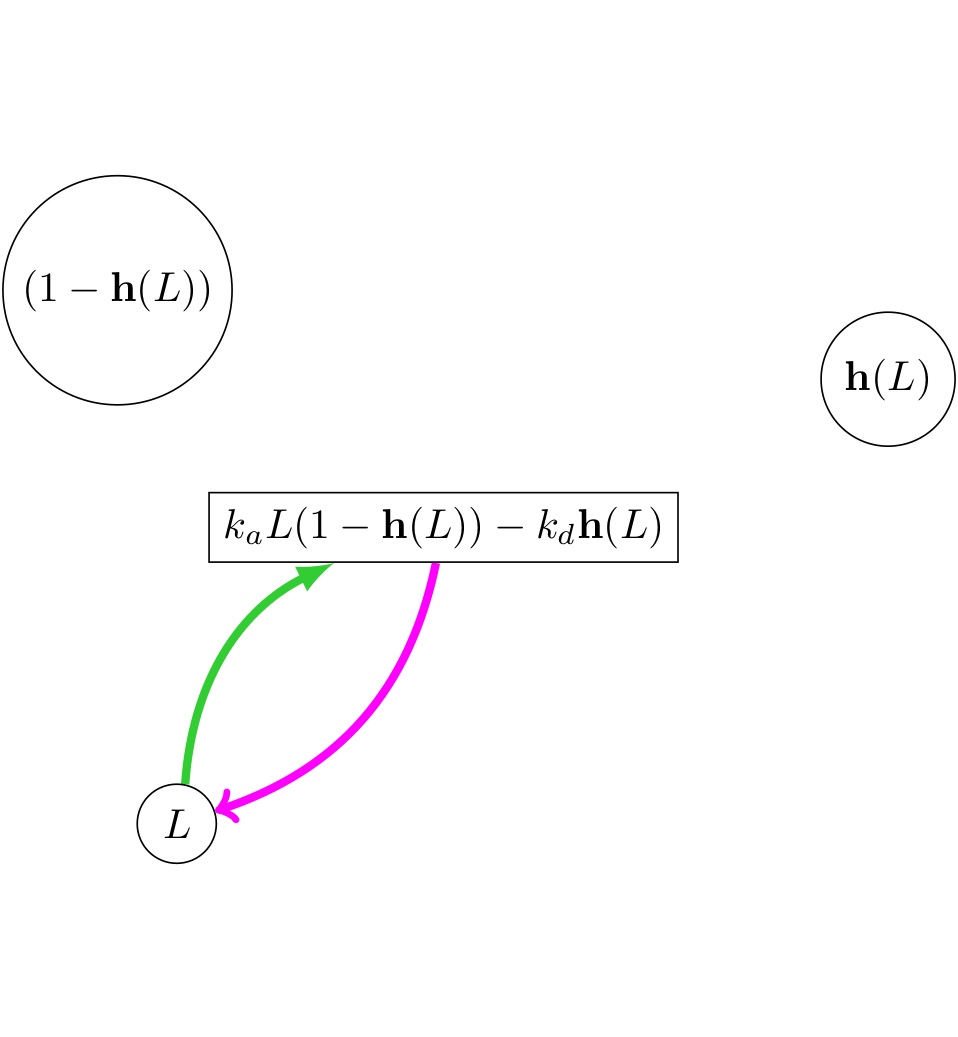}
  \captionof{figure}{}
  \label{fig:hill4}{}{}
\end{minipage}
\caption{We take a little liberty in Figure \ref{fig:hill2} and pre-emptively merge the processes, as we know we want to eliminate $P_f$ and $P_o$. We perform the modelling assumptions in steps. In Figure \ref{fig:hill3}, we replace $P_f$ with $(1-P_o)$, and cut edges. In Figure \ref{fig:hill4}, we replace $P_o$ with a placeholder expression in terms of $L$, and we cut edges.}
\end{figure}

So we have a dynamical system on just the ligand concentration:

$$\dot{L} = k_a L (1-\mathbf{h}(L)) - k_d \mathbf{h}(L)$$
The third assumption, which sets $\dot{L} = 0$, requires some non-diagrammatic mathematics.

\begin{align*}
&\implies &k_a L (1-\mathbf{h}(L)) = k_d \mathbf{h}(L)\\
&\implies &\mathbf{h}(L) = \frac{L}{\frac{k_d}{k_a}+L}
\end{align*}

We have recovered the Hill equation with apparent dissociation coefficient $\frac{k_d}{k_a}$ (the dissociation/association ratio) and Hill coefficient 1. We may recover other integer Hill coefficients\footnote{An aside on physical interpretations: a more reasonable alternative to considering the protein to have $n$ active sites is to suppose that \emph{in expectation}, the protein must encounter some number $n$ of ligands before a successful binding, possibly because the key-lock mechanism is quite complex, and not just any meeting between the protein and ligand results in a successful bond.} by amending our basic processes to require multiple free ligands as input/output: $P_f \otimes \bigotimes^n L \overset{k_a}{\multimap} P_o$ and $P_o \overset{k_d}{\multimap} P_f \otimes \bigotimes^n L$ by the same reasoning yields the Hill equation with Hill coefficient $n$.

\subsubsection{Inhibition}

Once we have done some algebra to obtain approximate solutions for petri nets, any time we see a subdiagram in a larger that looks like a system we have already solved for is amenable to simplification.\\

We present a toy model of inhibition via the Hill equation. $I$ behaves as an inhibitor of $A \multimap B$ when -- under some auxiliary assumptions -- $A$ is the input requirement of a process $A \multimap B$, and there are processes in the system which place $A$ in the role of free receptor, $A^*$ in the role of bound receptor, and $I$ as ligand.

The simplifying assumptions are as follows. First, we assume that the species $A^*$ and $I$ are in equilibrium, or close, in the reaction network, \emph{i.e.} $\dot{A^*} \approx 0$ and $\dot{I} \approx 0$: equilibrium conditions were necessary to justify the use of the Hill equation in our prior derivation. Secondly, we assume that the processes involving $A^*$ and $I$ happen on a much faster timescale than processes involving $A$, such that $\dot{A} \approx 0$. In conjunction with the first assumption that grants $\dot{A}^* \approx 0$, we may have $\dot{A} + \dot{A^*} \approx 0$, so we may treat the total amount of $A$ in the system -- whether bound or free -- as approximately constant, and thus normalisable: \emph{i.e.} $A + A^* \approx 1$ at all times, so we have access to the $A \mapsto (1-A^*)$ assumption mirroring our derivation for the Hill equation.

\begin{figure}[h]
\centering
\begin{minipage}{.3\textwidth}
  \centering
  \includegraphics[width=0.8\linewidth]{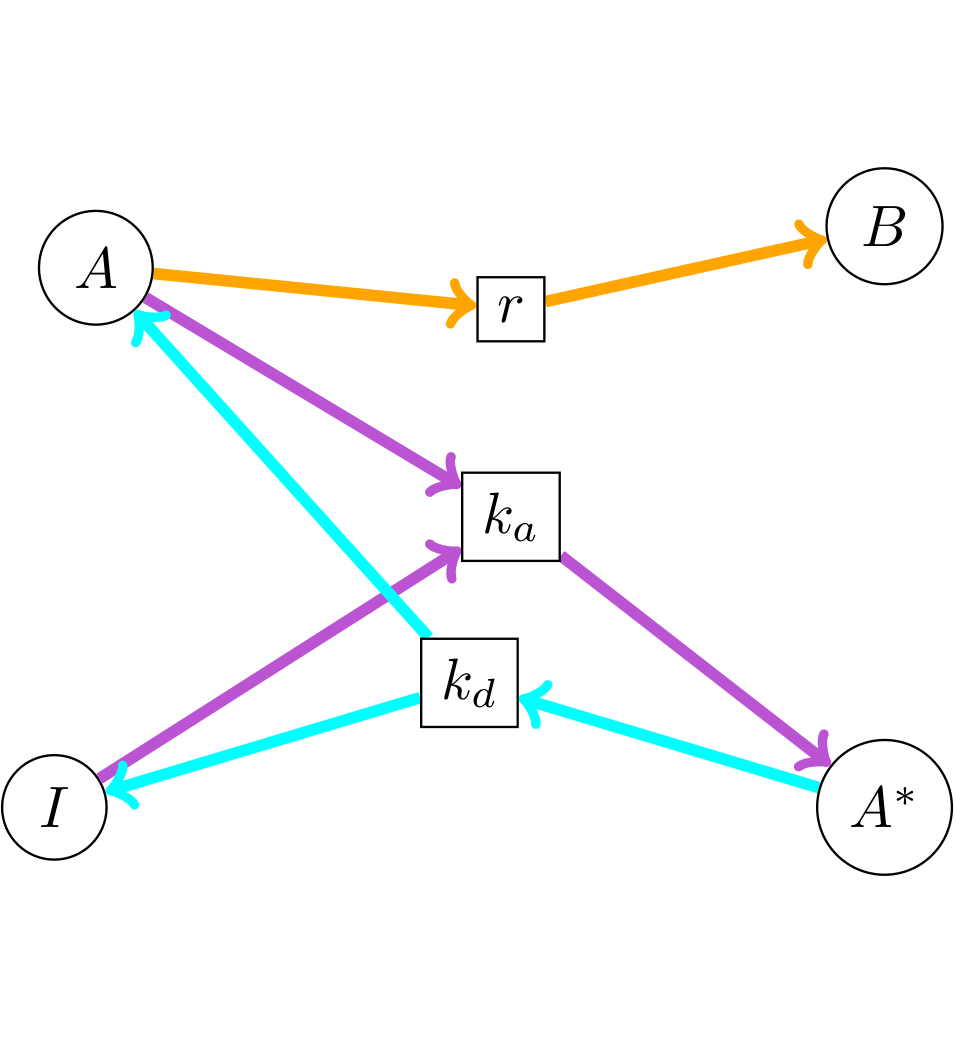}
  \captionof{figure}{}
  \label{fig:hillinhib}{}{}
\end{minipage}
\begin{minipage}{.3\textwidth}
  \centering
  \includegraphics[width=.8\linewidth]{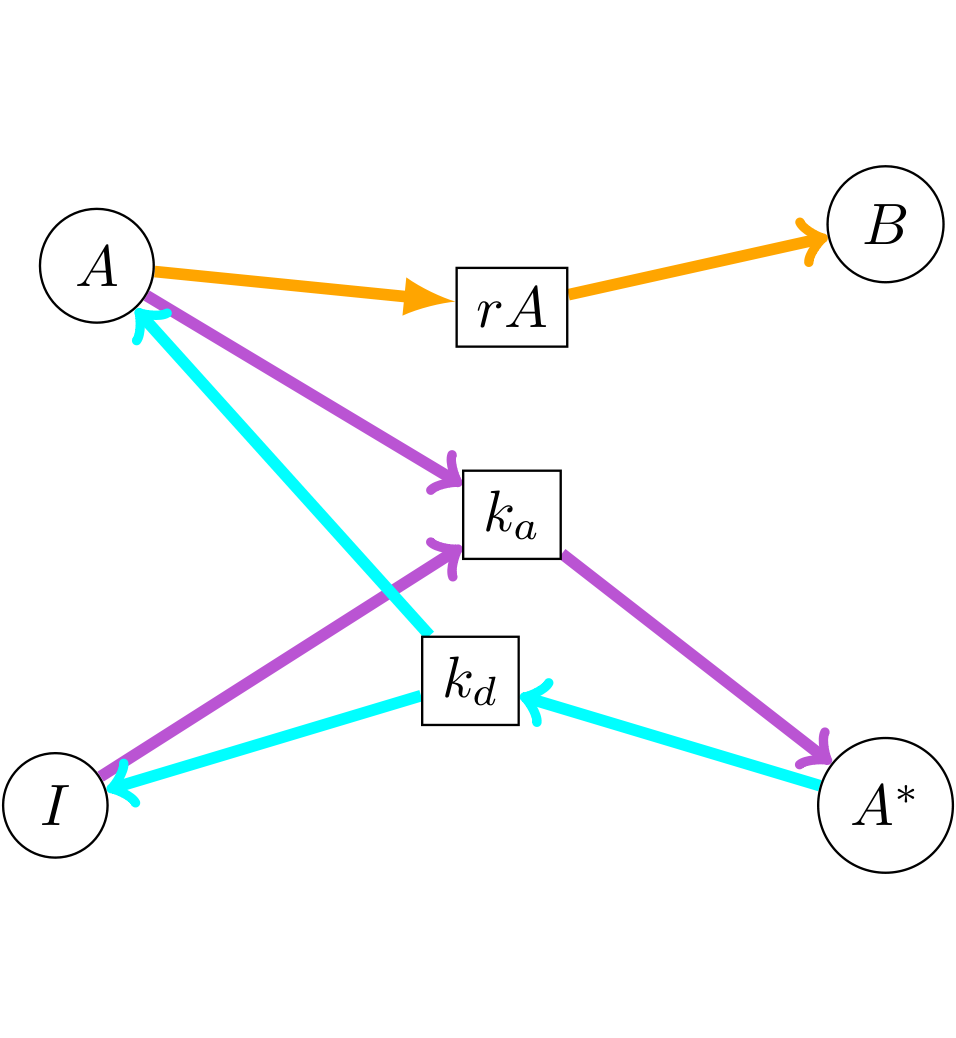}
  \captionof{figure}{}
  \label{fig:hillinhib2}{}{}
\end{minipage}
\begin{minipage}{.3\textwidth}
  \centering
  \includegraphics[width=.8\linewidth]{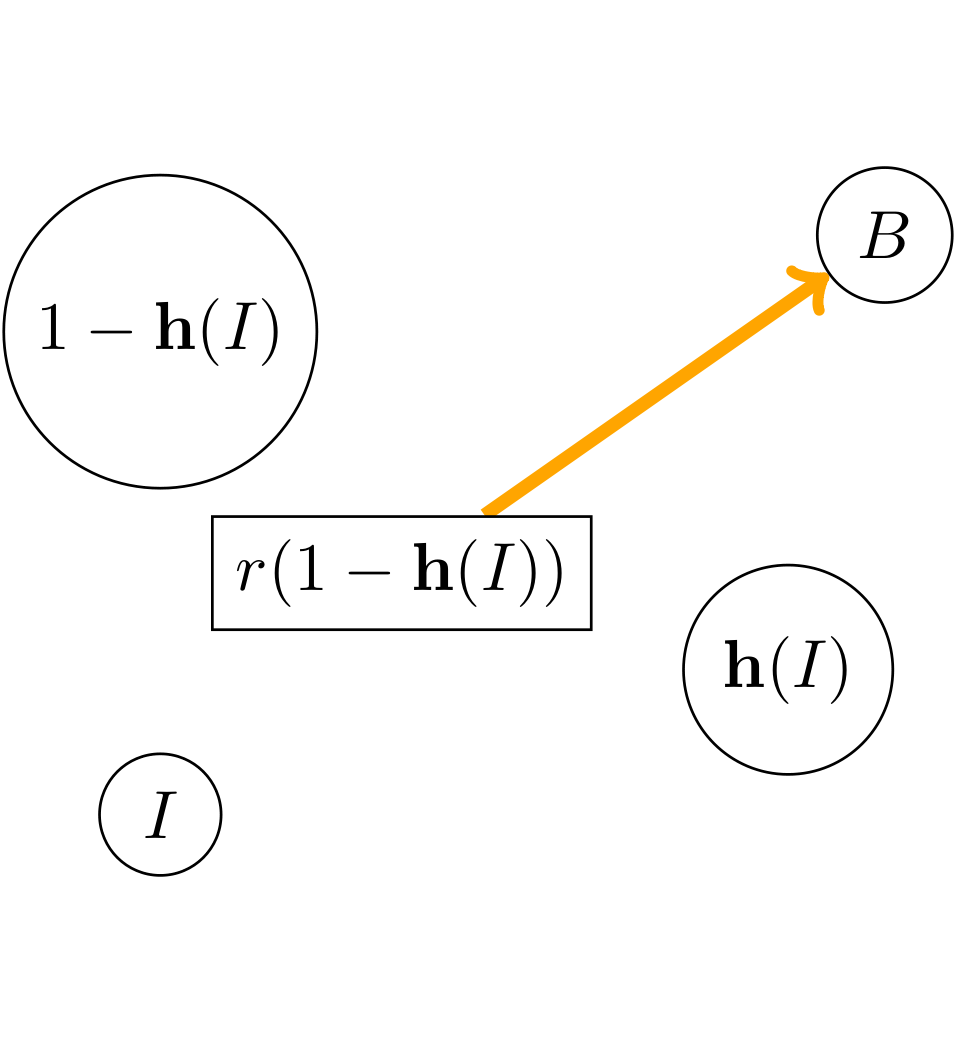}
  \captionof{figure}{}
  \label{fig:hillinhib3}{}{}
\end{minipage}
\caption{We note that inhibition is quite a complex phenomenon, compared to catalysis. Technically speaking, the Hill function modelling assumption is inappropriate here, as $A$ is being consumed to produce $B$, but this can be amended by assuming $A \overset{r}{\multimap} A \otimes B$, without compromising the intention of the example.}
\end{figure}

The assumptions we have made are only necessary to employ Hill equations, not for achieving an inhibitory effect: even without the assumptions, $I$ retains an inhibitory effect, which is evident from a qualitative examination of the reaction schema. As the quantity of $I$ increases, by Law of Mass Action, more $A$ are bound into a form $A^*$ and made unavailable for $A \multimap B$. The fundamental principle is that the processes $A \multimap B$ and $A \otimes I \multimap A^*$ are in competition for the same finite pool of $A$.

\subsection{Allee Effect}

The Allee effect \cite{allee_studies_1932} in ecology occurs where a population exhibits bistable behaviour: there are three equilibria, which are, in increasing order, $T = 0$, $T = T_{\text{crit}}$, and $T = T_{\text{max}}$. Further, $0$ and $T_{\text{max}}$ are stable equilibria, and $T_{\text{crit}}$ is unstable. The qualitative effect is that when the population is above the critical size $T_{\text{crit}}$, it will tend to grow until reaching $T_{\text{max}}$. Below the critical threshold, the population will tend to extinction.\\

In practice, the Allee effect can arise when things require other things around to support reproduction \cite{davis_pollen_2004} or when cooperative behaviour in things is disabled when there are too few things around \cite{berec_multiple_2007}. We will show how the qualitative characteristics of the Allee effect may be recovered in our framework, in turn shedding light on the structure of the basic processes from which the effect may arise.\\

We should remark that Allee Effects are typically modelled by cubic first-order differential equations in one variable, of the form $\dot{T} = T(T_{\text{crit}}-T)(T-T_{\text{max}})$. While this does provide the correct qualitative behaviour in the system, this `equilibrium-pasting' approach is incompatible with knowledge of what underlying processes might already constitute the system in question. Our approach starts from knowledge of basic processes to \emph{recover} the Allee effect.\\

Let us suppose we have the species $T$, $E$, and $TT$. The first two are the familiar `things' and `energy', and the third is a new species which we consider `pair-bonded things', that behave as the parents of a familial unit. Suppose that there are association and dissociation processes $T \otimes T \multimap TT$ and $TT \multimap T \otimes T$ with rates $k_a$ and $k_d$ respectively. The remainder of the setup is similar to our approach for modelling logistic growth with the finite energy assumption. The birth process is typed $TT \otimes E \multimap TT \otimes T$; parents may consume some energy from the system and create a child thing. We also amend the death process, now typed $T \multimap E$; when a thing dies, it returns its essence to the earth, or is cannibalised.\\

In this setup, we model the fraction of pair-bonded things $TT$ as a Hill function $\mathbf{h}(T)$, so we capture the total number of $TT$ with the modelling assumption $TT \mapsto T\mathbf{h}(T)$. We keep the finite energy assumption $E \mapsto (c - T)$ from before. We will also adopt the facilitating assumption that association and dissociation rates are so large compared to birth and death rates that the association-dissociation subsystem is essentially equilibrated\footnote{Living appears to occur more frequently than being born or dying.}, which is not a modelling assumption, but will allow us to cut two more edges in Figure \ref{fig:allee4}.

\begin{figure}[h]
\centering
\begin{minipage}{.45\textwidth}
  \centering
  \includegraphics[width=0.8\linewidth]{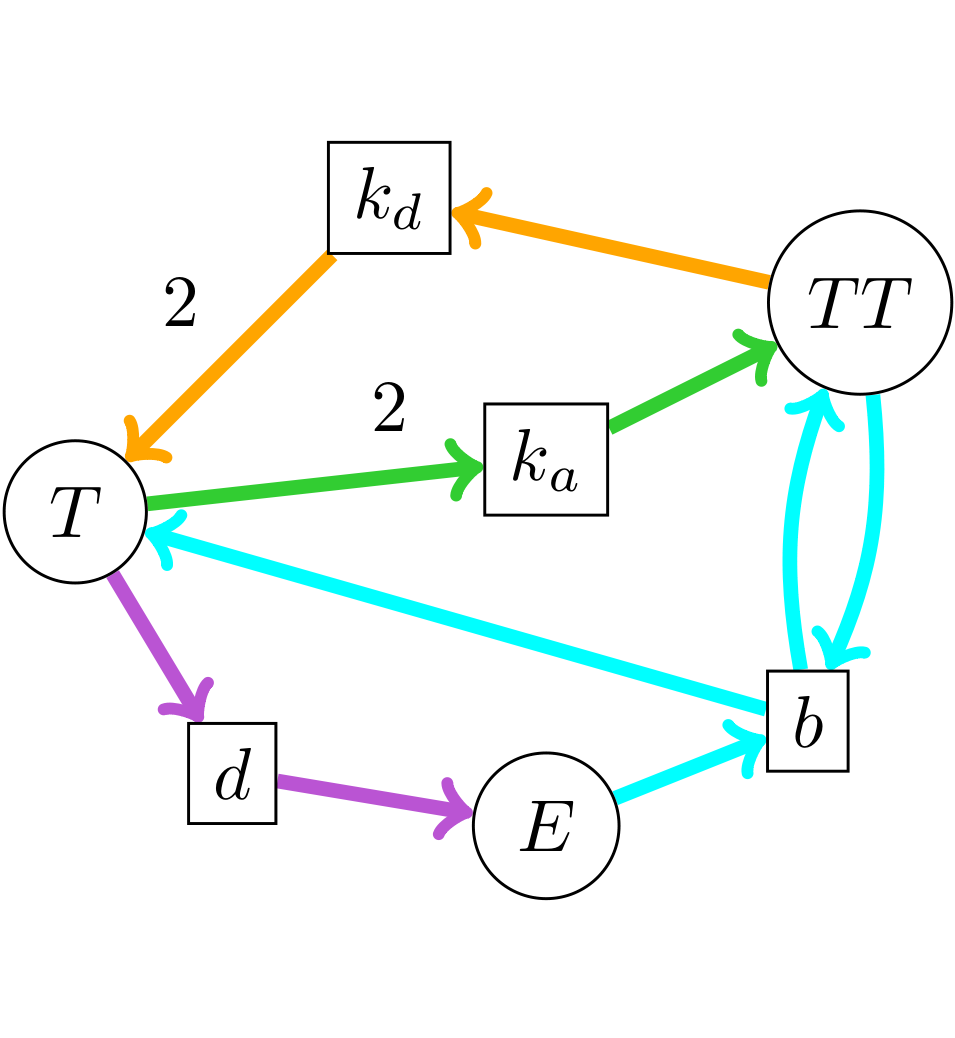}
  \captionof{figure}{}
  \label{fig:allee}{}{}
\end{minipage}
\begin{minipage}{.45\textwidth}
  \centering
  \includegraphics[width=0.8\linewidth]{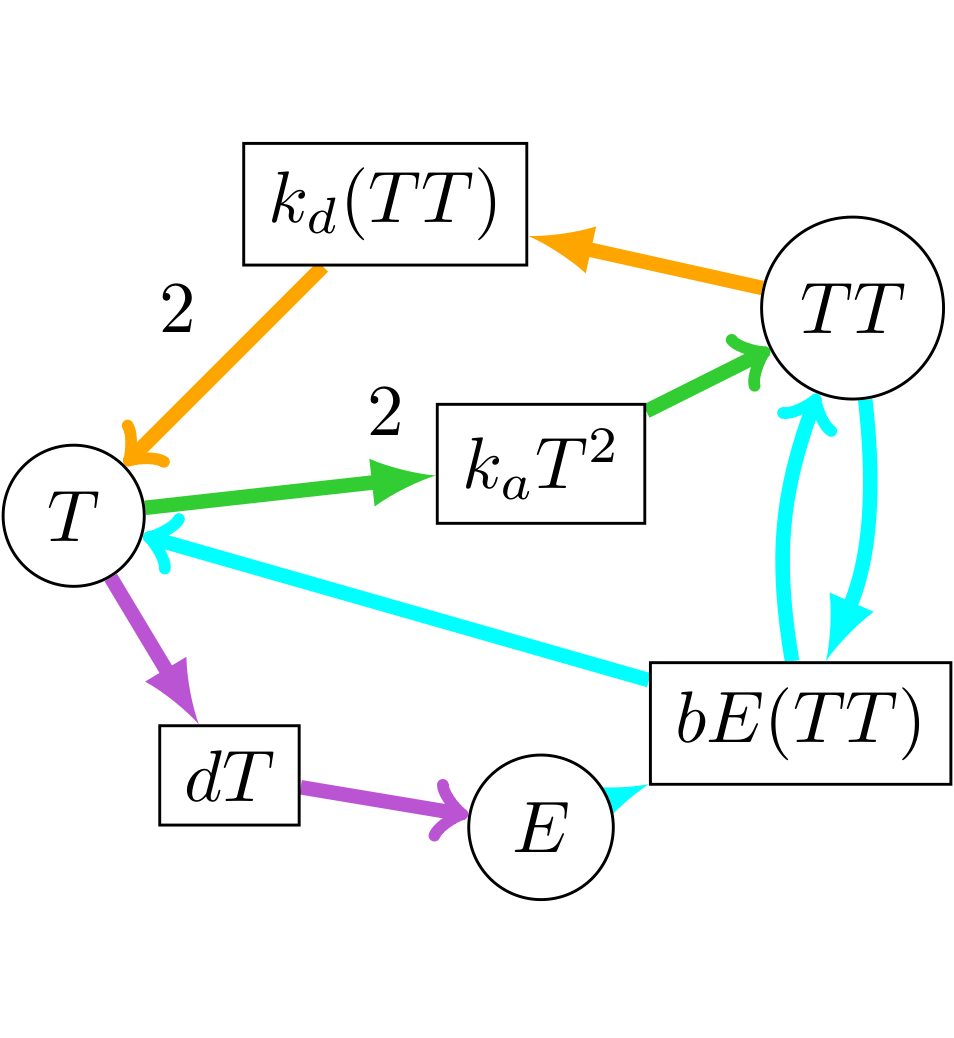}
  \captionof{figure}{}
  \label{fig:allee2}{}{}
\end{minipage}
\begin{minipage}{.45\textwidth}
  \centering
  \includegraphics[width=0.8\linewidth]{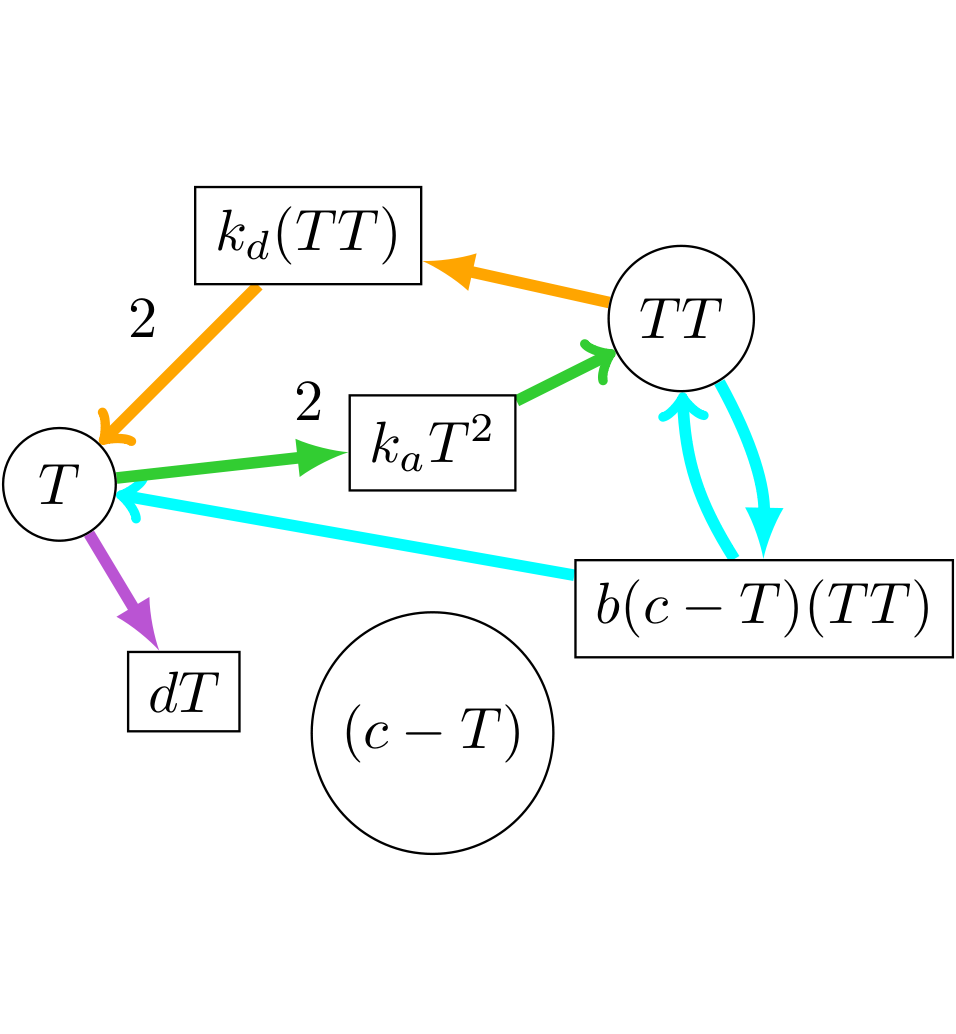}
  \captionof{figure}{}
  \label{fig:allee3}{}{}
\end{minipage}
\begin{minipage}{.45\textwidth}
  \centering
  \includegraphics[width=0.8\linewidth]{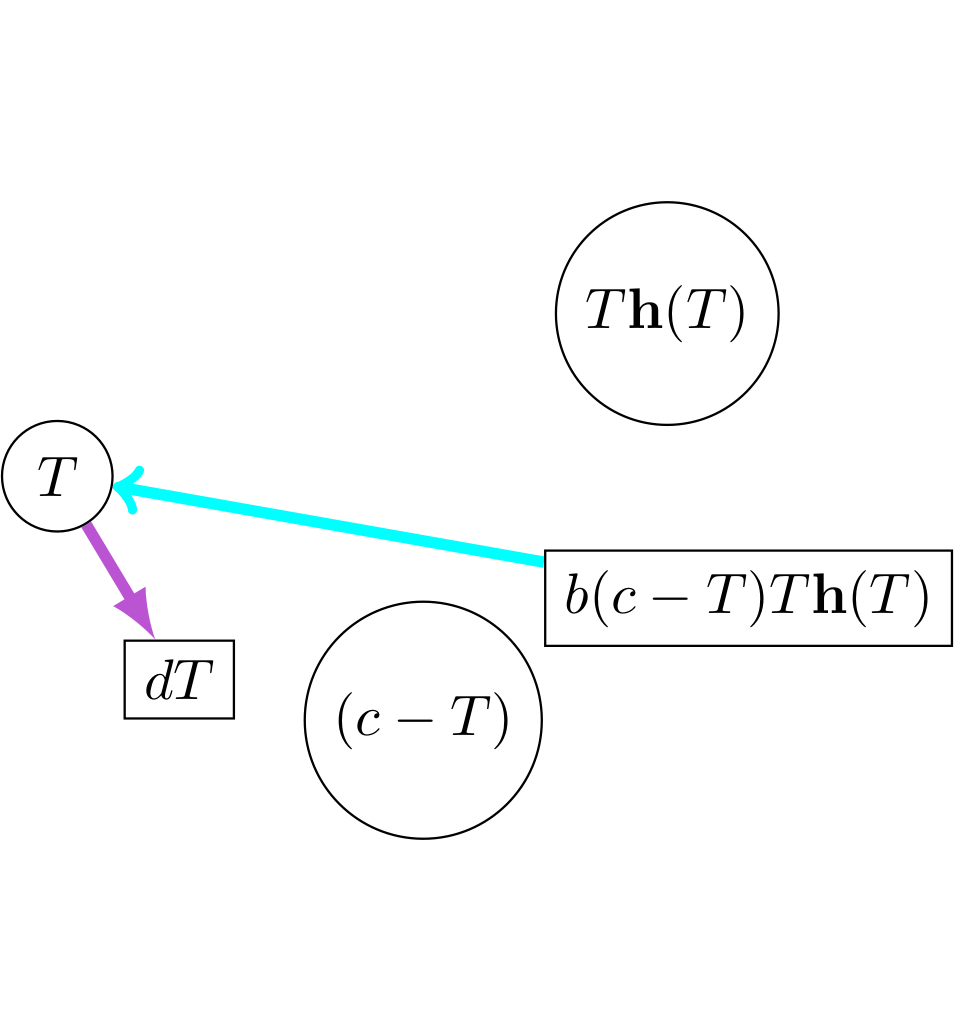}
  \captionof{figure}{}
  \label{fig:allee4}{}{}
\end{minipage}
\caption{In the transition from the first to second figure, we calculate mass-respecting rates. In the transition from second to third, we execute $E \mapsto (c-T)$, and in the final transition, we execute $TT \mapsto T\mathbf{h}(T)$. We also assume $k_a$ and $k_d$ equilibrate the outgoing and incoming edges from $T$ due to association and dissociation, so we remove those too.}
\end{figure}{}

We can read off the resultant dynamical system on the variable $T$ as:

$$\dot{T} = b \overbrace{(c-T)}^{\text{finite energy}} \underbrace{T  \mathbf{h}(T)}_{\text{from }TT} \  - \  dT$$

\begin{figure}[h]
\centering
\includegraphics[width=0.7\linewidth]{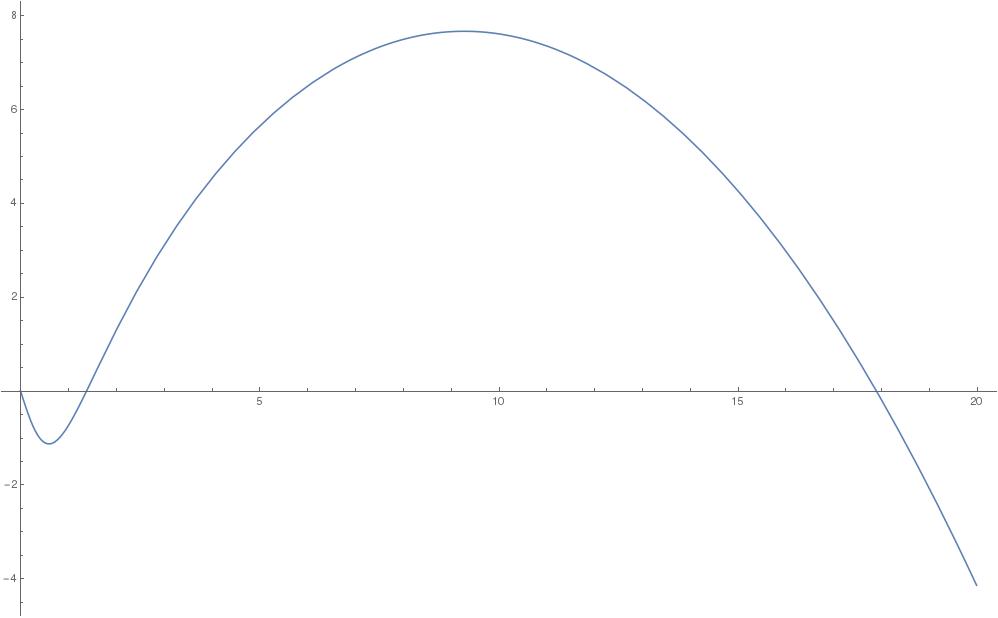}
\caption{Plot of $\dot{T}$ against $T$ displaying Allee effect, with parameters $\text{b} \mapsto 0.1$, $\text{d} \mapsto 3.2$, and Hill coefficient $2$.}
\end{figure}

\clearpage

\section{Case Study: HES1 Gene Regulation Network}

In this section we demonstrate how we may recover quantitative models from a real gene transcription network that qualititatively match empirical data under control and intervention conditions. We consider the HES1 transcription network, which was demonstrated by \cite{hirata_oscillatory_2002} to exhibit oscillatory behaviour due to the presence of a negative feedback loop. Following a petri-esque schematic obtained from \cite{sturrock_spatial_2013}, the HES1 gene network is understood to consist of the following processes:

\begin{figure}[h]
\centering
\includegraphics[width=0.5\linewidth]{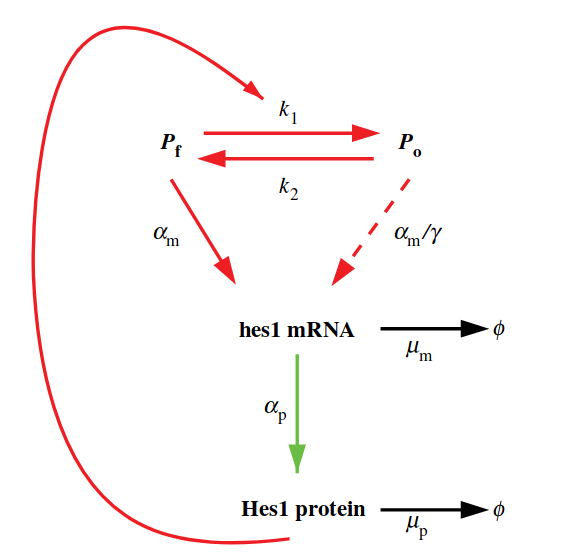}
\caption{(Picture and caption from \cite{sturrock_spatial_2013}, which we have truncated at the end to omit discussion of spatial aspects, so we may ignore the colours of the arrows): The negative feedback loop in the Hes1 gene reaction network (GRN). When the promoter site is free, hes1 mRNA is transcribed at its maximal rate (m). hes1 mRNA then produces Hes1 protein via the process of translation. Hes1 protein occupies the promoter and represses the transcription of its own mRNA. The occupied promoter site is still able to produce hes1 mRNA, but at a significantly reduced rate.}
\end{figure}

\begin{itemize}
\item{mRNA transribes/synthesises proteins without being consumed $$M \overset{t}{\multimap} M \otimes P$$}
\item{the protein HES1 binds to its own promoter P \begin{align*} H \otimes P_f \overset{k_a}{\multimap} P_o\\ P_o \overset{k_d}{\multimap} H \otimes P_f \end{align*}}
\item{P makes more HES1 mRNA. It makes it more effectively when free than when bound to the HES1 protein. \begin{align*} P_f \overset{m}{\multimap} P_f \otimes M \\ P_o \overset{\gamma \times m}{\multimap} P_f \otimes M \ (0<\gamma<1) \end{align*}

\underline{Our modelling assumption:} the quantity P is constant, so we can abstract it away with a Hill equation. $P_f \mapsto (1-\mathbf{h}(H))$, $P_o \mapsto \mathbf{h}(H)$.}
\item{mRNA and HES1 are constantly being reduced by cell garbage-collection. Hirata et al. calculate half-lives of $24.1 \pm 1.7$ minutes and $22.2 \pm 3.1$ minutes respectively.
\begin{itemize}
\item{In Hirata et als' supplementary materials, they provide the differential equation model:

\begin{align*}
&\mathbf{S}={H,M,P}\\
&\dot{H} =(-0.022) HP + (0.3) M - (0.031)H\\
&\dot{M} = (-0.028)M + \frac{(0.5)}{1+H^2} \\
&\dot{P} =  (-0.022)HP + \frac{20}{1+H^2} - (0.3)P\\
\end{align*}

The authors make explicit in their supplementary material that the $(0.031)$ and $(0.028)$ that govern exponential decay of $H$ and $M$ are obtained from their measured half-lives. $(0.031)$ and $(-0.028)$ are (approximate) solutions for $(1-x)^{24.1} = 1/2$ and $(1-x)^{22.2} = 1/2$ respectively, so in numerical simulations of Hirata \emph{et. al}'s model we chose their normalised timestep of 1 minute.
}
\item{Following their half-life measurements and minute-timestep, we fix the same rates for our garbage collection processes. \begin{align*} H \overset{0.031}{\multimap} 0 \\ M \overset{0.028}{\multimap} 0 \end{align*}}
\end{itemize}}
\end{itemize}

The model we obtain by our method is:

\begin{align*}
&\mathbf{S} = \{H,M\}\\
&\dot{H} = k_a H (1 - \mathbf{h}(H)) + k_d \mathbf{h}(H) + tM - (0.031)H\\
&\dot{M} = m (1 - \mathbf{h}(H)) + m \gamma \mathbf{h}(H) - (0.028)M
\end{align*}

With free parameters $\{k_a,k_d,m,\gamma,t,\mathbf{h} (\text{hill coefficient})\}$, where all values are positive, and in addition, $0 < \gamma < 1$.

\begin{figure}[h]
\centering
\begin{minipage}{.45\textwidth}
  \centering
  \includegraphics[width=0.8\linewidth]{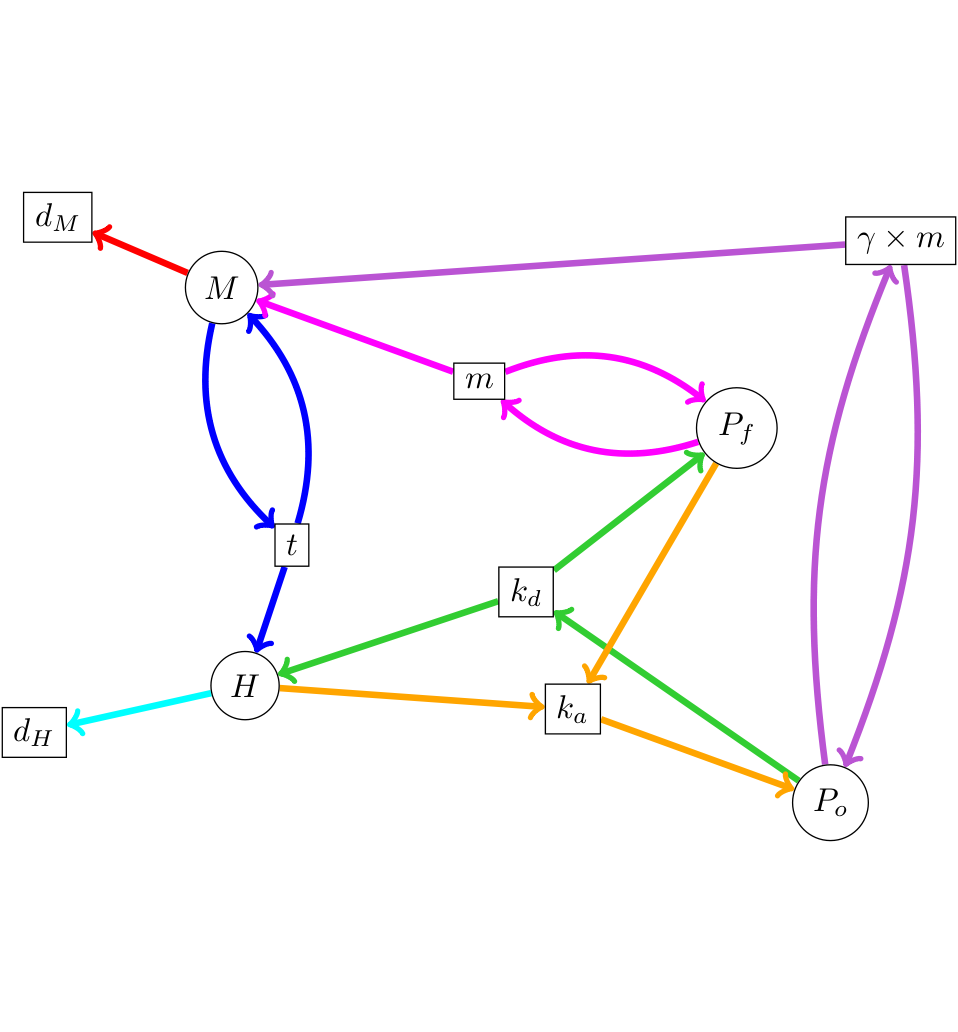}
  \captionof{figure}{}
  \label{fig:hes}{}{}
\end{minipage}
\begin{minipage}{.45\textwidth}
  \centering
  \includegraphics[width=0.8\linewidth]{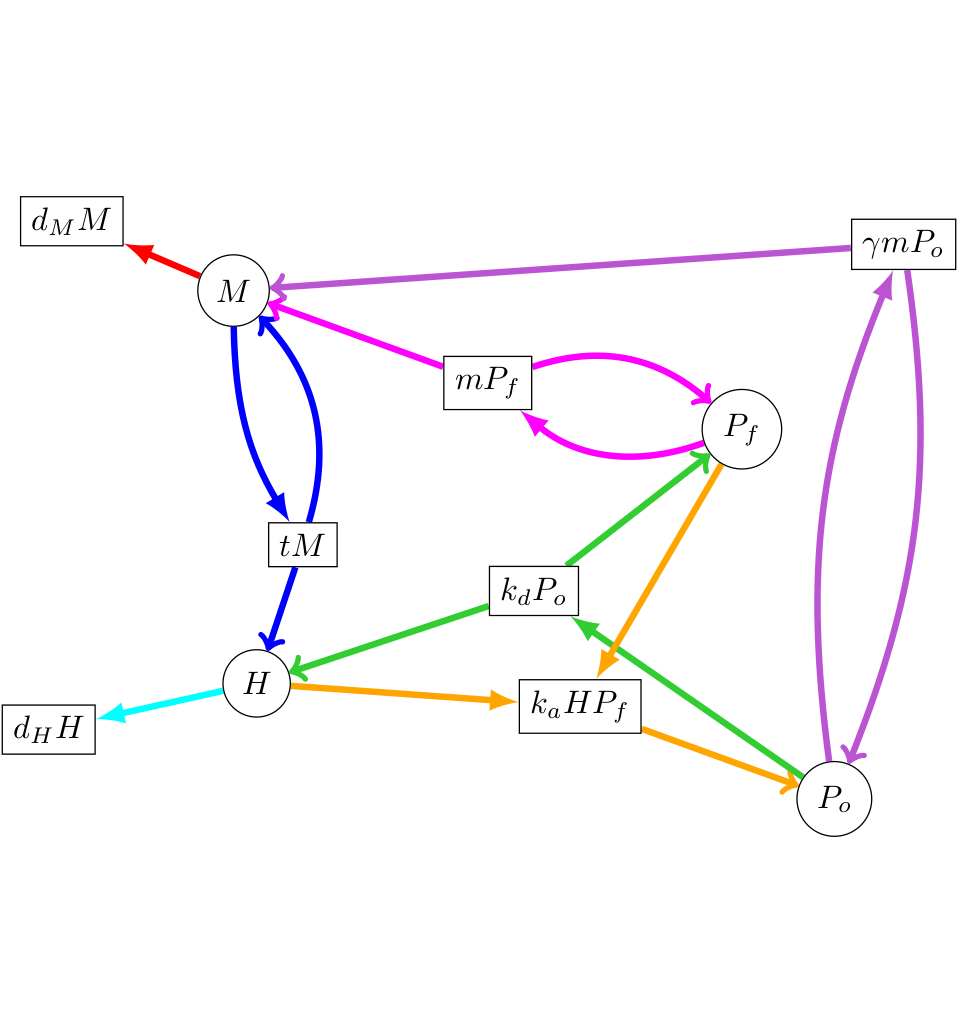}
  \captionof{figure}{}
  \label{fig:hes2}{}{}
\end{minipage}
\begin{minipage}{.45\textwidth}
  \centering
  \includegraphics[width=0.8\linewidth]{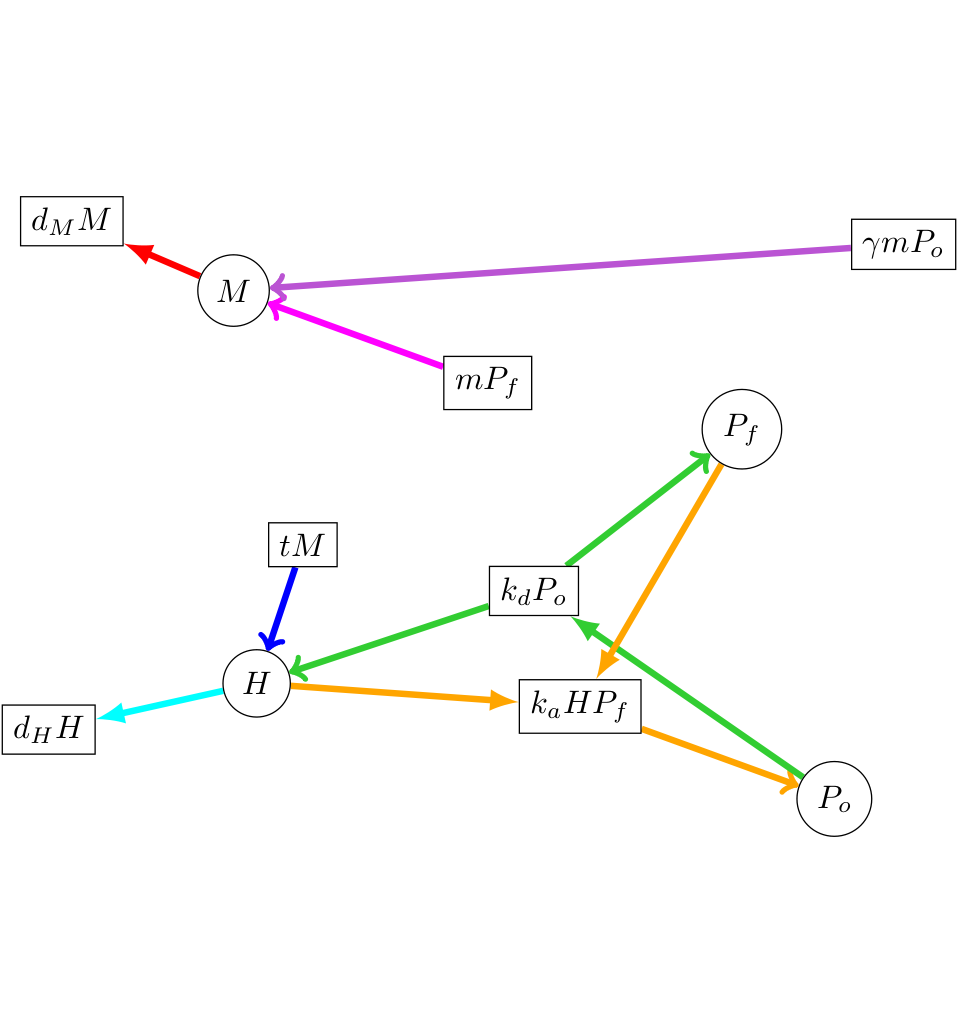}
  \captionof{figure}{}
  \label{fig:hes3}{}{}
\end{minipage}
\begin{minipage}{.45\textwidth}
  \centering
  \includegraphics[width=0.8\linewidth]{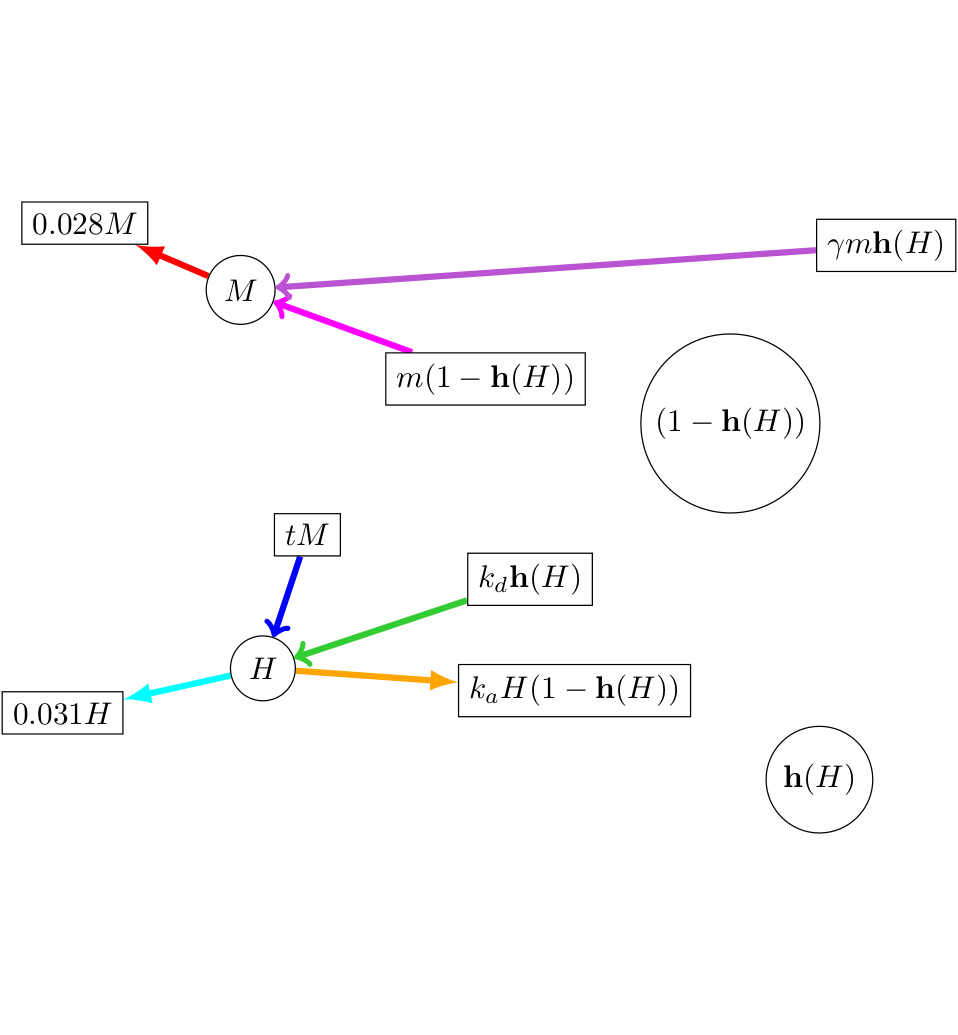}
  \captionof{figure}{}
  \label{fig:hes4}{}{}
\end{minipage}
\caption{We proceed through the manipulations as we have done before: calculating mass respecting rates, then cancelling opposing edges. In the last step, we apply the modelling assumptions $P_f \mapsto (1-\mathbf{h}(H))$ and $P_o \mapsto \mathbf{h}(H)$, cutting edges incident to $P_f$ and $P_o$. We also instantiate $d_H$ and $d_M$ with the empirically obtained values.}
\end{figure}

\clearpage

\subsection{Doing Computer Science}

Where empirical data is available, we can instantiate unknown rate constants using gradient descent. We could not access raw numerical data, but we were able to transcribe Hirata \emph{et. al}'s observations by hand-and-ruler from their graphs. We used a two-sided linear loss function around the midpoints of the error bars, which also included additional multiplicative penalty factors for large jumps in value (to prevent bifurcations in the recurrence relation, as we performed numerical simulation), non-periodic or eventually constant behaviour, deviation from period 120, and maximum values of 9 amd 4 for the protein and mRNA models respectively. Our resulting model (Figure \ref{ourmodel}) is qualitatively indistinguishable from that of Hirata \emph{et al} (Figure \ref{hirataoriginal})\footnote{neither are particularly good at threading the observed data.}.

\begin{figure}[h]
\centering
\includegraphics[width=0.6\linewidth]{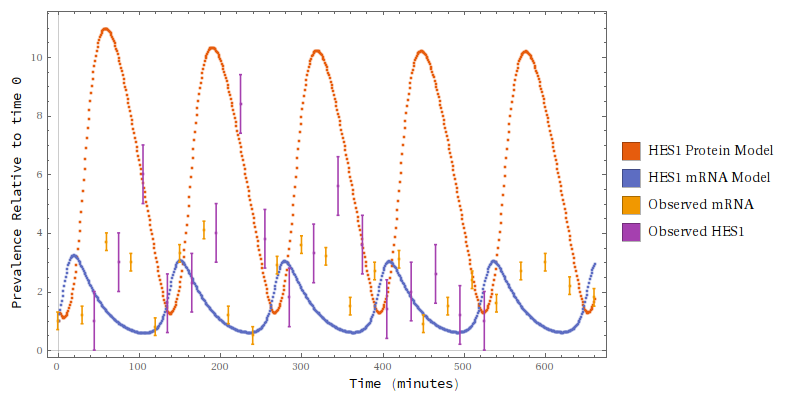}
\caption{Hirata \emph{et al}'s differential equation model. Their units for measuring change in HES1 protein and mRNA are normalised against the first observation in the time series, due to their methodology of quantifying their measurements. Their series of measurements of the two quantities were not performed at concurrent times (Hirta et. al fig 1A and 1B). The mRNA measurements were taken at thirty minute intervals starting at $t_0$, whereas the protein measurements were taken at thirty minute intervals starting at 45. We transcribed the data by ruler measurements and increased the error bars to reflect that. We assumed an initial state vector $\{1,1,1\}$ for the authors at time 0.}
\label{hirataoriginal}
\end{figure}

\begin{figure}[h]
\centering
\includegraphics[width=0.6\linewidth]{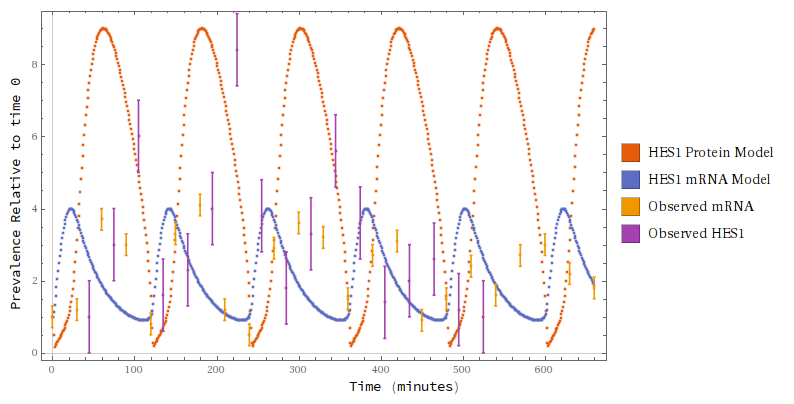}
\caption{Our model with fitted parameters $k_a \mapsto 1.35213,
k_d \mapsto 0.00969532,
m \mapsto 0.294401,
\gamma \mapsto 0.0385978,
t \mapsto 0.20275,
\mathbf{h} (\text{Hill equation power}) \mapsto 2.16271$.}
\label{ourmodel}
\end{figure}

\clearpage

\subsection{Doing Science}

Now we can do something that Hirata \emph{et. al} did not do, and to the best of our knowledge, has not been done before: a principled interventional evaluation of dynamical systems models with respect to empirical data. Petri nets are prescriptive frameworks for systems, while dynamical systems models are descriptive. Bridging the two by Law of Mass Action gives us the best of both worlds.\\

Hirata \emph{et al} collected data from two intervention conditions. The first is equivalent to a suppression of HES1 protein garbage collection. The second is equivalent to a suppression of the gene transcription process $M \multimap M \otimes H$ alongside a suppression of the mRNA garbage collection process $M \multimap 0$. Both of these intervention conditions target transitions in our initial petri net, so we can test the correctness of our fitted model with respect to data collected in the intervention conditions: if we can obtain a qualitative match to empirical data by reducing the rate constants of the suppressed transitions, then we at least know that our model is not obviously wrong.

\begin{figure}[h]
\centering
\begin{minipage}{\textwidth}
  \centering
  \includegraphics[width=0.3\linewidth]{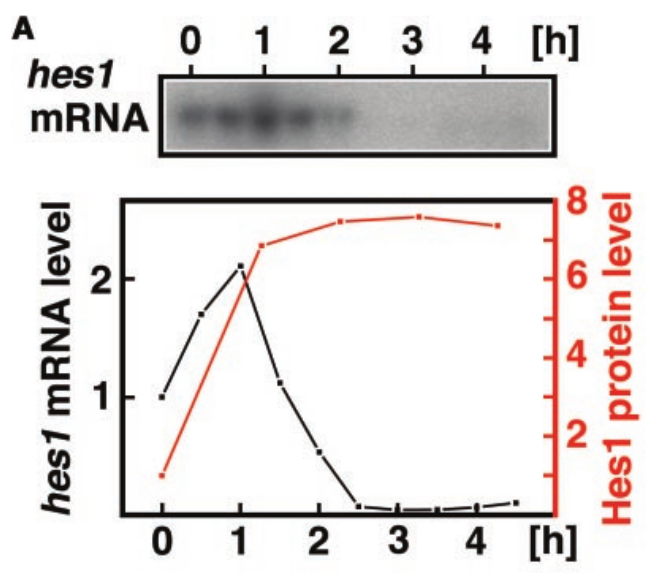}
\end{minipage}
\begin{minipage}{\textwidth}
  \centering
  \includegraphics[width=\linewidth]{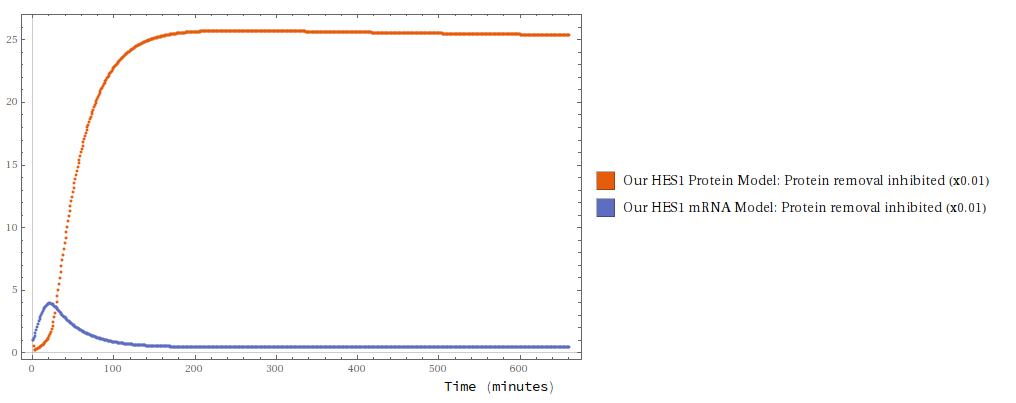}
\end{minipage}
\caption{Above: Hirata's empirical data obtained by intervention by inclusion of MG132, which inhibits protein destruction. Modelling protein destruction inhibition in our model. Below: Our HES1 Protein values are off by about a factor of 3, but the shapes look right.}
\end{figure}

\begin{figure}[h]
\centering
\begin{minipage}{\textwidth}
  \centering
  \includegraphics[width=0.3\linewidth]{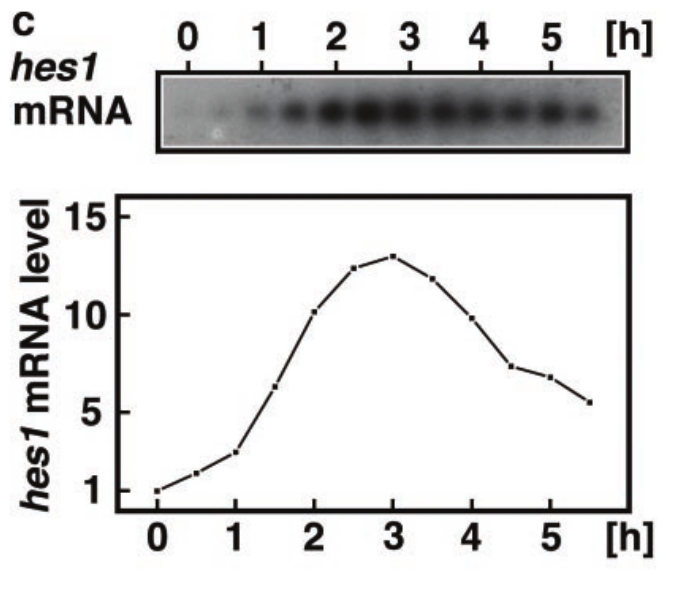}
\end{minipage}
\begin{minipage}{\textwidth}
  \centering
  \includegraphics[width=\linewidth]{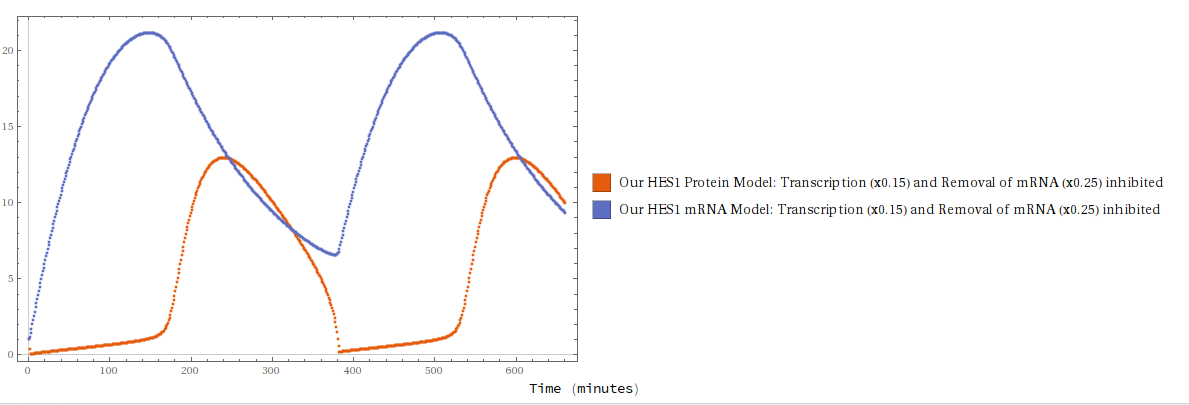}
\end{minipage}
\caption{Above: Hirata's empirical data obtained by infecting C3H10T1/2 cells with cyclohexamide, known to suppress gene transcription and mRNA destruction. Below: Modelling the cyclohexamide intervention in our model by directly manipulating the rate constants of the two affected processes. We're off by smaller factors this time, and we couldn't catch the negative second derivative suggested by the data, but the shapes look about right again.}
\end{figure}

\clearpage

\section{Conclusion}

We have displayed a framework that constructs dynamical systems from petri nets, by assuming Law of Mass Action. The differential equations so obtained are descriptive and qualitatively faithful to empirical data, as unknown rate variables can be tuned -- subject to constraints -- to fit empirical data. The underlying petri nets offer a foothold for structured inquiry and interrogation, and as we have shown in our HES1 gene network case study, tuning the rate coefficients of processes in correspondence with laboratory interventions can yield agreement when the basic processes are well understood. Thus we have proposed, to the best of our knowledge, novel foundations for a scientific method for dynamical systems.\\

We are aware of categorical approaches to dealing with the semantics and phenomena of dynamical systems, such as in \cite{baez_categories_2015}, \cite{baez_biochemical_2018} \cite{baez_biochemical_2018}, \cite{schultz_dynamical_2019} and \cite{baez_compositional_2017}. We have obtained our results independently of these frameworks, as we are unconcerned with the compositionality of these systems, and we have focused our attention on the highly restricted cases of dynamical systems where Law of Mass Action applies. We observe that any compositional framework that is suitable for petri nets with rates may be lifted through our method to induce a compositional structure on dynamical systems specified by first order ODEs in rational functions.\\

The notion of polynomial-algebraic models for dynamic systems is not new. In the field of theoretical computer science, \cite{berry_chemical_nodate} and \cite{abramsky_computational_1993} have proposed models to Linear Logic in terms of ``Chemical Abstract Machines", the syntax of which bears some similarity to that which we have introduced, though there is no discussion of relations to dynamical systems \emph{per se}, as beyond the syntax, the remainder of the formalism consists of proof rules, though inspired by chemical bonding and dissociation, are only applied in the context of proof-theoretic concerns.\\

\cite{veliz-cuba_polynomial_2010} develop a polynomial-algebraic framework for specifying discrete dynamical systems, which they have also explicitly related to logical models and petri-net models. However, they only consider finite fields, which they leverage using a theorem that nonconstructively guarantees the ability to model any function between such finite fields. By contrast, our approach carries no such restriction in model domains, and explicitly requires structural knowledge of the processes that comprise a system: this is the requirement for structured and scientific inquiry.\\

The requirement of foreknowledge of constituent processes hints at a broader theory of composing such processes, which was rediscovered many times, elaborated for instance by \cite{brown_categorical_1995} in a purely formal manner, demonstrating that petri nets may be considered atomic terms in Linear Logic, and hence may be composed with the same Linear Logical structure. From the other end, \cite{engberg_petri_nodate} demonstrated that petri nets are suitable models of Linear Logic in and of themselves. Future work would be to reconcile these two views with a logic of dynamical systems as we have described. The closest extant candidate to such a reconciliation might be \cite{kahramanog_deductive_nodate}, who has devised a deductive compositional approach borrowing ideas from the logical subfield of deep inference, with an explicit view towards constructing a syntax to express and compose dynamical models for complex biochemical systems.\\

Our method brings to mind several immediate use cases. The first is automated hypothesis generation. If one possesses partial information concerning the underlying processes some phenomenon, one can guess at what the missing processes are, and choose promising candidates among those guesses for further investigation based on how well the resulting dynamical systems agree with empirical data. In other words, candidate petri nets with assumptions that yield good dynamical systems may posit hidden constituent processes, and these hidden constituent processes merit discovery or dismissal. Though there may be guesswork involved in choosing those hidden processes, the guesswork is in the space of petri nets, which is relatively structured and constrained, when compared to the space of all possible dynamical systems. Furthermore, so long as known processes are expressed in the petri net along with their tuned rates, any candidate petri net that contains what one already knows as a subnet is forced to agree with all previously established knowledge.\\

If the dynamical models obtained agree with empirical data under interventions targeting every transition in the petri net, then the model is as correct as reasonably possible. Under such circumstances, one may predict the effect of previously unseen linear combinations of known interventions with reasonable confidence. If these predictions are wrong, then either one of the constituent processes is not `atomic', possibly composed of hidden transitions and hidden species (which provokes investigation), or the Law of Mass Action does not hold.\\

The Law of Mass Action is a massively simplifying assumption, that requires large numbers of mindless species interacting within a fundamentally non-spatial arena in random ways. Further, we have only considered ordinary first order differential equations, which amounts to the tacit physical assumption that the processes in the system are totally determined by the participating species: the species are the only numeraire. This rules out easy expression of higher order differential equations. In time we hope to do away with some of these simplifying assumptions.

\clearpage

\begin{appendix}

\begin{algorithm}
\caption{Obtain dynamical system from set of basic processes $\mathbf{B}$ over state variables $\mathbf{S}$.}
\begin{algorithmic}
\STATE{(\text{Initialise variables to hold analytic expressions for each }$\Delta A$)}
\FOR{$A \in S$}
\STATE $\Delta A \leftarrow 0$
\ENDFOR
\STATE{(\text{Initialise variables to hold free rate coefficients for each basic process})}
\FOR{$b_i \in \mathbf{B}$}
\STATE $Rate_i \leftarrow \mathbf{r}_i$
\ENDFOR
\STATE{(\text{For each basic process...})}
\FOR{$b_i \in \mathbf{B}$}
\REQUIRE $b_i \equiv \bigotimes\limits_{1}^{k} n_i A_i \multimap \bigotimes\limits_{k+1}^{k+l}n_jA_j$
\STATE{(\text{...initialise variables to hold onto Law of Mass Action rate and Net change in species...})}
\STATE $LMA \leftarrow 1$
\STATE $Net \leftarrow 0$
\STATE{(\text{...and for each occurrence of a species with multiplicity...})}
\FOR{$i \leq k+l$}
\STATE (\text{...if the occurrence is as an input...})
\IF{$i \leq k$}
\STATE (\text{...account for Law of Mass Action multiplicatively...})
\STATE $LMA \leftarrow LMA \times A_i^{n_i}$
\STATE (\text{...and subtractively account for Net change.})
\STATE $Net \leftarrow Net - n_i A_i$
\STATE (\text{If the occurrence is as an output, additively update Net change.})
\ELSE[$i > k$]
\STATE $Net \leftarrow Net + n_i A_i$
\ENDIF
\STATE (\text{Update the rate of this basic process with LMA})
\STATE $Rate_i \leftarrow Rate_i \times LMA$
\STATE (\text{Split the Net change of this process among all species, with rate.})
\REQUIRE $Net \equiv \sum\limits_{1}^{m} n_m A_m$
\FOR{$A_m \in Net$}
\IF{$n_m < 0$}
\STATE $\Delta A_m \leftarrow \Delta A_m - Rate_i * n_m * A_m$
\ELSE[$n_m > 0$]
\STATE $\Delta A_m \leftarrow \Delta A_m + Rate_i * n_m * A_m$
\ENDIF
\ENDFOR
\ENDFOR
\ENDFOR
\RETURN $\Delta A: (A \in \mathbf{S})$
\STATE $(\text{Output is a dynamical system over } \mathbf{S} \text{ with free variables }\mathbf{r}_i)$
\end{algorithmic}
\end{algorithm}

\end{appendix}

\clearpage

\nocite{*}
\bibliography{sm4dc}

\newcommand{\etalchar}[1]{$^{#1}$}
\begin{thebibliography}{VCJL10b}

\bibitem[AB32]{allee_studies_1932}
W.~C. Allee and Edith~S. Bowen.
\newblock Studies in animal aggregations: {Mass} protection against colloidal
  silver among goldfishes.
\newblock {\em Journal of Experimental Zoology}, 61(2):185--207, 1932.

\bibitem[Abr93]{abramsky_computational_1993}
Samson Abramsky.
\newblock Computational interpretations of linear logic.
\newblock {\em Theoretical Computer Science}, 111(1-2):3--57, April 1993.

\bibitem[BAC07]{berec_multiple_2007}
Luděk Berec, Elena Angulo, and Franck Courchamp.
\newblock Multiple {Allee} effects and population management.
\newblock {\em Trends in Ecology \& Evolution}, 22(4):185--191, April 2007.

\bibitem[BE15]{baez_categories_2015}
John~C. Baez and Jason Erbele.
\newblock Categories in {Control}.
\newblock {\em arXiv:1405.6881 [quant-ph]}, May 2015.
\newblock arXiv: 1405.6881.

\bibitem[Ber]{berry_chemical_nodate}
GCrard Berry.
\newblock The {Chemical} {Abstract} {Machine}.
\newblock page~14.

\bibitem[BFM19]{baez_network_2019}
John~C. Baez, John Foley, and Joe Moeller.
\newblock Network {Models} from {Petri} {Nets} with {Catalysts}.
\newblock {\em Compositionality}, 1:4, December 2019.
\newblock arXiv: 1904.03550.

\bibitem[BG95]{brown_categorical_1995}
C.~Brown and D.~Gurr.
\newblock A {Categorical} {Linear} {Framework} for {Petri} {Nets}.
\newblock {\em Information and Computation}, 122(2):268--285, November 1995.

\bibitem[BP17]{baez_compositional_2017}
John~C. Baez and Blake~S. Pollard.
\newblock A {Compositional} {Framework} for {Reaction} {Networks}.
\newblock {\em Reviews in Mathematical Physics}, 29(09):1750028, October 2017.
\newblock arXiv: 1704.02051.

\bibitem[BPLS18]{baez_biochemical_2018}
John~C. Baez, Blake~S. Pollard, Jonathan Lorand, and Maru Sarazola.
\newblock Biochemical {Coupling} {Through} {Emergent} {Conservation} {Laws}.
\newblock {\em arXiv:1806.10764 [physics, q-bio]}, June 2018.
\newblock arXiv: 1806.10764.

\bibitem[DTLS04]{davis_pollen_2004}
Heather~G. Davis, Caz~M. Taylor, John~G. Lambrinos, and Donald~R. Strong.
\newblock Pollen limitation causes an {Allee} effect in a wind-pollinated
  invasive grass ({Spartina} alterniflora).
\newblock {\em Proceedings of the National Academy of Sciences of the United
  States of America}, 101(38):13804--13807, September 2004.

\bibitem[EW]{engberg_petri_nodate}
Uffe Engberg and Glynn Winskel.
\newblock Petri nets as models of linear logic.
\newblock page~15.

\bibitem[GSG17]{garfinkel_modeling_2017}
Alan Garfinkel, Jane Shevtsov, and Yina Guo.
\newblock {\em Modeling {Life}: {The} {Mathematics} of {Biological} {Systems}}.
\newblock Springer International Publishing, 2017.

\bibitem[HH52]{hodgkin_quantitative_1952}
A.~L. Hodgkin and A.~F. Huxley.
\newblock A quantitative description of membrane current and its application to
  conduction and excitation in nerve.
\newblock {\em The Journal of Physiology}, 117(4):500--544, August 1952.

\bibitem[HYO{\etalchar{+}}02]{hirata_oscillatory_2002}
Hiromi Hirata, Shigeki Yoshiura, Toshiyuki Ohtsuka, Yasumasa Bessho, Takahiro
  Harada, Kenichi Yoshikawa, and Ryoichiro Kageyama.
\newblock Oscillatory {Expression} of the {bHLH} {Factor} {Hes1} {Regulated} by
  a {Negative} {Feedback} {Loop}.
\newblock {\em Science (New York, N.Y.)}, 298:840--3, November 2002.

\bibitem[JKZ19]{jacobs_causal_2019}
Bart Jacobs, Aleks Kissinger, and Fabio Zanasi.
\newblock Causal {Inference} by {String} {Diagram} {Surgery}.
\newblock {\em arXiv:1811.08338 [cs, math]}, July 2019.
\newblock arXiv: 1811.08338.

\bibitem[Kah]{kahramanog_deductive_nodate}
Ozan Kahramanog.
\newblock A {Deductive} {Compositional} {Approach} to {Petri} {Nets} for
  {Systems} {Biology}.
\newblock page~20.

\bibitem[Mas19]{master_generalized_2019}
Jade Master.
\newblock Generalized {Petri} {Nets}.
\newblock {\em arXiv:1904.09091 [math]}, August 2019.
\newblock arXiv: 1904.09091.

\bibitem[Pea00]{pearl_causality_2000}
Judea Pearl.
\newblock {\em Causality: {Models}, {Reasoning}, and {Inference}}.
\newblock Cambridge University Press, Cambridge, U.K. ; New York, March 2000.

\bibitem[Pet62]{petri_kommunikation_1962}
Carl~Adam Petri.
\newblock Kommunikation mit {Automaten}.
\newblock {\em
  http://edoc.sub.uni-hamburg.de/informatik/volltexte/2011/160/pdf/diss\_petri.pdf},
  1962.

\bibitem[SHMC13]{sturrock_spatial_2013}
Marc Sturrock, Andreas Hellander, Anastasios Matzavinos, and Mark A.~J.
  Chaplain.
\newblock Spatial stochastic modelling of the {Hes1} gene regulatory network:
  intrinsic noise can explain heterogeneity in embryonic stem cell
  differentiation.
\newblock {\em Journal of The Royal Society Interface}, 10(80):20120988, March
  2013.

\bibitem[SSV19]{schultz_dynamical_2019}
Patrick Schultz, David~I. Spivak, and Christina Vasilakopoulou.
\newblock Dynamical {Systems} and {Sheaves}.
\newblock {\em arXiv:1609.08086 [math]}, March 2019.
\newblock arXiv: 1609.08086.

\bibitem[SW03]{shiue_advanced_2003}
Wen-Tsong Shiue and W.~Wanalertlak.
\newblock An advanced cell polynomial-base modeling for logic synthesis.
\newblock pages 393--396, October 2003.

\bibitem[Tho19]{thornton_karl_2019}
Stephen Thornton.
\newblock Karl {Popper}.
\newblock In Edward~N. Zalta, editor, {\em The {Stanford} {Encyclopedia} of
  {Philosophy}}. Metaphysics Research Lab, Stanford University, winter 2019
  edition, 2019.

\bibitem[VCJL10a]{veliz-cuba_polynomial_2010}
Alan Veliz-Cuba, Abdul~Salam Jarrah, and Reinhard Laubenbacher.
\newblock Polynomial algebra of discrete models in systems biology.
\newblock {\em Bioinformatics}, 26(13):1637--1643, July 2010.

\bibitem[VCJL10b]{veliz-cuba_polynomial_2010-1}
Alan Veliz-Cuba, Abdul~Salam Jarrah, and Reinhard Laubenbacher.
\newblock Polynomial algebra of discrete models in systems biology.
\newblock {\em Bioinformatics}, 26(13):1637--1643, July 2010.

\bibitem[WL87]{wickerson_very_1987}
John Wickerson and Imperial~College London.
\newblock A {Very} {Rough} {Introduction} to {Linear} {Logic}.
\newblock page~50, 1987.

\bibitem[ÃT89]{erdi_mathematical_1989}
Péter Érdi and János Tóth.
\newblock {\em Mathematical {Models} of {Chemical} {Reactions}: {Theory} and
  {Applications} of {Deterministic} and {Stochastic} {Models}}.
\newblock Manchester University Press, 1989.
\newblock Google-Books-ID: iDu8AAAAIAAJ.

\end{thebibliography}
\bibliographystyle{alpha}

\end{document}